\documentclass[onefignum,onetabnum]{siamonline250211}

\usepackage{lipsum}
\usepackage{amsfonts}
\usepackage{graphicx}
\usepackage{epstopdf}
\usepackage{algorithmic}
\ifpdf
  \DeclareGraphicsExtensions{.eps,.pdf,.png,.jpg}
\else
  \DeclareGraphicsExtensions{.eps}
\fi

\usepackage{enumitem}
\setlist[enumerate]{leftmargin=.5in}
\setlist[itemize]{leftmargin=.5in}

\newsiamremark{remark}{Remark}
\newsiamremark{hypothesis}{Hypothesis}
\crefname{hypothesis}{Hypothesis}{Hypotheses}
\newsiamthm{claim}{Claim}
\newsiamremark{fact}{Fact}
\crefname{fact}{Fact}{Facts}

\headers{Dynamic solutions of next generation neural field models with delays}{O. E. Omel{'}chenko and C. R. Laing}

\title{Dynamic solutions of next generation neural field models with delays
\thanks{Submitted to the editors DATE.
\funding{The work of O.E.O. was supported by the Deutsche Forschungsgemeinschaft (DFG)
under Grant No. OM 99/2-3.}}}

\author{Oleh E. Omel'chenko\thanks{Institute of Physics and Astronomy, University of Potsdam, Potsdam, Germany 
  (\email{omelchenko@uni-potsdam.de}).}
\and Carlo R. Laing\thanks{School of Mathematical and Computational Sciences, 
Massey University, Auckland, New Zealand 
  (\email{c.r.laing@massey.ac.nz}).}
}

\usepackage{amsopn}

\ifpdf
\hypersetup{
  pdftitle={Dynamic solutions of next generation neural field models with delays},
  pdfauthor={O. E. Omel'chenko and C. R. Laing}
}
\fi

\newcommand{\fr}[2]{\frac{\displaystyle #1}{\displaystyle #2}}
\newcommand{\df}[2]{\frac{\displaystyle d#1}{\displaystyle d#2}}

\newcommand{\pf}[2]{\frac{\displaystyle \partial #1}{\displaystyle \partial #2}}

\newcommand{\Real}{\:\mathrm{Re}\:}
\newcommand{\Imag}{\:\mathrm{Im}\:}

\newcommand{\be}{\begin{equation}}
\newcommand{\ee}{\end{equation}}
\newcommand{\bp}{\begin{pmatrix}}
\newcommand{\ep}{\end{pmatrix}}

\newcommand{\la}{\langle}
\newcommand{\ra}{\rangle}

\begin{document}

\maketitle

% REQUIRED
\begin{abstract}
We study networks of theta neurons arranged on a ring with delayed interactions. In the continuum limit the systems
are described by next generation neural field models with delays. We consider distributed delays with both
finite and infinite support, and conduction delays. The stability of spatially uniform and localized bump states
is determined, and we find that they undergo Hopf bifurcations as parameters related to the delays are varied.
These bifurcations create traveling waves and ``breathing'' bump solutions. These dynamic solutions satisfy
self-consistency equations and we show how to efficiently solve these equations. Following
traveling waves and periodic solutions as parameters are varied provides a global picture 
of the influence of different delays on pattern formation processes
in spatially extended networks of theta neurons.
 
\end{abstract}

% REQUIRED
\begin{keywords}
neural field model, delays, theta neuron, Riccati equation.
\end{keywords}

% REQUIRED
\begin{MSCcodes}
92B20, 92C20, 34K20, 34K18, 34K13, 37G10, 37G15, 92B25, 34C15
\end{MSCcodes}

\section{Introduction}

Time delays are ubiquitous in neural systems. Action potentials travel at finite speeds along axons and
dendrites, and synaptic processing is not instantaneous~\cite{cam07,gab09,coolai09}.
Many authors have studied the effects of delays in neural 
models~\cite{golerm00,visnic17,devmon18,rahbly15,aldcam24}, 
which often take the form  of delay differential
equations (DDEs). Often, the only analytical progress that can be made in the study of DDEs
is the determination of the stability of a steady state, although periodic solutions
can be constructed in special cases~\cite{coolai09,laikra25}.
The accurate numerical solution of DDEs requires dedicated software such as the 
Matlab solver {\tt dde23}~\cite{shatom01}, and numerical bifurcation analysis of DDEs can be performed
using software such as DDE-BIFTOOL~\cite{DDEBIF}.

One well-studied type of neural model is the neural field model, describing activity at a spatial
scale much greater than that of a single neuron~\cite{coo05,coogra14,ama77,bre12,wilcow73}. Such models
typically take the form of an integro-differential equation, involving an integral over space to model
long range connections between neurons. Many authors have studied the effects of delays
in these classical neural field
models~\cite{roxbru05,atahut06,atahut05,vencoo07,fayfau10,vel13,roxmon11,qubjir09}.

However, the neural field models just mentioned are essentially phenomenological, not rigorously
derived from networks of spiking neurons. In contrast, the last decade has seen the introduction and study of {\em
next generation} neural field models, derived from networks of theta 
neurons~\cite{lai14a,lai15,laiome23,coobyr19,byravi19,esnrox17,schavi20,coo23}. The derivation relies on
the Ott/Antonsen ansatz~\cite{ottant08,ottant09}, a result showing the existence of a low-dimensional
attracting manifold on which the dynamics of infinite heterogeneous networks of phase oscillators
can be described by differential equations governing a small number of macroscopic variables.
One of the first applications of the Ott/Antonsen ansatz was to the study of chimera 
states in networks of sinusoidally coupled Kuramoto phase oscillators~\cite{abrmir08,lai09B}, but theta
neurons have the same mathematical form as Kuramoto oscillators and the ansatz is equally
applicable to infinite networks of them~\cite{lukbar13,lai14a}. It is of interest to compare
the effects of delays in classical neural field models with the effects in next generation models.

Next generation neural field models take the form of a Riccati equation, i.e.,~the time derivative
of a complex-valued variable is given by a quadratic function of that variable, with
possibly time-dependent coefficients~\cite{laiome23}.
Thus periodic solutions of these models are periodic solutions of periodically-forced  Riccati equations.
If the periodic forcing is prescribed, the periodic solution of interest can be found in a computationally
efficient manner using a small number of numerical integrations of Riccati equations~\cite{ome23,ome22,laiome23}.
In previous work we used this technique to study periodic solutions in 
next generation neural field models~\cite{laiome23}. One of the models studied there had a 
single discrete delay
and we followed periodic solution of the model as the delay was varied.

In this paper we greatly extend the results in~\cite{laiome23} to models in which there is a continuum 
of delays, either distributed delays or conduction delays. For each of three models, we semi-analytically
determine the stability of spatially uniform and bump states. We show how to derive equations, 
the solutions of which self-consistently describe
traveling waves and breathing bump solutions, and how to efficiently solve these equations. Following 
solutions of these equations as parameters are varied we find that traveling waves and breathing bumps are created in Hopf bifurcations
from spatially uniform and bump states, respectively. In many cases we determine the stability of solution
branches by direct numerical simulation of the corresponding neural field model.
Note that the use of self-consistency arguments to study the existence of solutions in infinite networks
of phase oscillators goes back to at least Kuramoto~\cite{kuramotobook84}.

Our basic model is a system
of $N$ theta neurons described by the equations
\be
   \frac{d\theta_j}{dt}=1-\cos{\theta_j}+(1+\cos{\theta_j})(\eta_j+\kappa I_j(t)),
   \qquad j=1,2\dots, N.
   \label{eq:dthetadt}
\ee
The heterogeneity of neurons results
from their excitability parameters $\eta_j$ being
chosen from a Lorentzian distribution
$$
\Lambda(\eta) = \frac{\Delta}{\pi} \frac{1}{(\eta - \eta_0)^2 + \Delta^2}
$$
with center $\eta_0$ and half-width at half maximum $\Delta$. In the absence of coupling
($\kappa=0$) a single theta neuron is the normal form of the saddle-node-on-invariant-circle
(SNIC) bifurcation~\cite{erm96,ermkop86}. The neuron is excitable if its intrinsic drive $\eta_j$ is negative,
and fires periodically with frequency $\sqrt{\eta_j}/\pi$ if $\eta_j>0$. 
The theta neuron is equivalent to a quadratic integrate-and-fire (QIF) neuron 
under the assumption of infinite firing threshold and reset values~\cite{monpaz15}.

Neurons interact synaptically by sensing input currents $I_j(t)$,
which consist of impulses coming
from all neurons in the system.
In addition, the coupling strength $\kappa$
characterizes the sensitivity of neurons
to these currents. (Coupling with gap junctions could also be 
included~\cite{lai15,omelai24}, but for simplicity we do not do so.)
To mimic the spatial organization
of the network, we assume that neurons
are arranged on a ring, i.e., a one-dimensional array
with periodic boundary conditions,
and that their interaction depends
on the distance between them.
A simple example of such a model was proposed in~\cite{lai14a},
where the input currents in~(\ref{eq:dthetadt}) had the form
\be
I_j(t) = \frac{2\pi}{N}\sum_{k=1}^N G_{jk} P(\theta_k(t)),
\label{Def:I_j:simple}
\ee
with a pulsatile function
$P(\theta)=(2/3)(1-\cos{\theta})^2$
representing the pulse of current emitted
by a neuron when $\theta$ increases through $\pi$.
(Note that, as explained in~\cite{lukbar13},
the exponent of $2$ in $P(\theta)$
can be varied to make the function more narrow.)
The coupling weights in~(\ref{Def:I_j:simple})
are defined by $G_{jk}=G(2\pi(j-k)/N)$
for some $2\pi$-periodic function $G(\cdot)$.
Thus, the resulting model~(\ref{eq:dthetadt}), (\ref{Def:I_j:simple}) represents $N$ theta neurons
equally-spaced around a ring of circumference~$2\pi$.
Studying a ring of neurons is appropriate when modeling behavior related to an angular
variable such as head direction~\cite{zha96} or angular distance in working memory
tasks~\cite{wimnyk14}. Other studies of ring networks include~\cite{laicho01,esnrox17,lai15,roxmon11}.

The spatio-temporal patterns
observed in model~(\ref{eq:dthetadt}), (\ref{Def:I_j:simple})
were analyzed in detail in~\cite{omelai22,laiome23}.
Roughly speaking, they include spatially uniform
quiescent and active states as well as bump states,
i.e., localized patches of activity,
which can be stationary, moving, or ``breathing'' (periodically varying width).
In the limit $N\gg 1$, the properties
of all the above states can be described
using a neural field equation
following from application of the Ott/Antonsen theory~\cite{ottant08,lukbar13}.
In this paper, we perform a similar analysis,
but in a much more complex setup,
where the interaction currents $I_j(t)$
contain physically motivated time-delays.
More specifically, we consider two types of delays.
\smallskip

(i) When impulses from other neurons arrive
at the $j$th neuron, it may exhibit some delay
before responding to them. Also,
the effect of the impulses is not necessarily instantaneous,
but may continue over time.
Both properties can be modeled by an input current with distributed delays
\be
I_j(t) = \frac{2\pi}{N}\sum_{k=1}^N G_{jk}\int_0^\infty
S(t')P(\theta_k(t-t'))\: dt'
= \frac{2\pi}{N}\sum_{k=1}^N G_{jk}\int_{-\infty}^{t}
S(t-s)P(\theta_k(s))\: ds,
\label{eq:Idisc}
\ee
where the delay kernel $S(t)$ is zero up until
a fixed time $\tau > 0$
and is a non-increasing function for $t \ge \tau$
such that $\int_0^\infty S(t) dt = 1$.
Such kernels have been studied in~\cite{lailon03},
for example.

For model~(\ref{eq:dthetadt}), (\ref{eq:Idisc}),
we distinguish two qualitatively different situations:
The case of a monotonically decreasing kernel $S(t)$
with infinite support, an example of which is
\be
   S(t) = \begin{cases} 0, & t<\tau, \\ a^{-2}(t-\tau)e^{-(t-\tau)/a}, & t\geq \tau \end{cases}
   \label{eq:kern}
\ee
with $a > 0$,
and the case of a constant finite-support kernel
\be
   S(t)=\begin{cases} \frac{1}{a} & \tau<t<\tau+a, \\ 0 & \mbox{otherwise} \end{cases}
   \label{eq:kern:finite}
\ee
with $a > 0$, as considered in, for example~\cite{rahbly15}. Note that in both cases
we have $\lim_{a\to 0} S(t) = \delta(t-\tau)$, i.e.,~just a discrete delay of $\tau$.
\smallskip

(ii) Another type of delay can occur in a neural network due to the finite speed of impulse transmission.
We call such delays conduction (or propagation) delays and represent them using input currents of the form
\be
   I_j(t)=\frac{2\pi}{N}\sum_{k=1}^N G_{jk}P\left(\theta_k\left(t-\tau-\frac{2\pi|j-k|}{Nc}\right) \right),\label{eq:IdiscB}
\ee
where $\tau$ is a fixed delay
and $c$ is the conduction speed.
Note that the expression $|j - k|$ here means
$\min( |j - k|, N - |j - k| )$ and thus
takes into account periodic boundary conditions.
Eqs.~\eqref{eq:dthetadt}, \eqref{eq:IdiscB}
model a system in which impulses arrive
after a time equal to a fixed delay $\tau$
(which may be zero) plus a time proportional
to the distance they have traveled.
Similar types of delays were studied in~\cite{setsen08,kanan24,zan00}, and
specifically in the context of neural models in~\cite{croerm97,vencoo07,velfau11,vanjan13,saycoo24}.

\section{Methods}

In the continuum limit case $N\to\infty$,
the dynamics of the discrete system~(\ref{eq:dthetadt}) admits a probabilistic description.
Assuming that $x_j = 2\pi j / N$ denotes
the position of the $j$th neuron in the interval $[0,2\pi]$,
we can describe the state of system~(\ref{eq:dthetadt})
with the probability density $\rho(\theta,x,\eta,t)$
such that $\rho(\theta,x,\eta,t) d\theta\: dx\: d\eta$
is the probability of finding neurons
with $\theta_j\in(\theta,\theta+d\theta)$,
$x_j\in(x,x+dx)$
and $\eta_j\in(\eta,\eta+d\eta)$ at time $t$.
Using the Ott/Antonsen theory~\cite{ottant08,ottant09},
it can be shown that the long-term dynamics
of the density $\rho(\theta,x,\eta,t)$
is attracted to an invariant manifold
parametrized by the complex-valued variable 
$$
z(x,t) = \int_{-\infty}^\infty
\int_0^{2\pi} e^{i \theta} \rho(\theta,x,\eta,t) d\theta\: d\eta
$$
known as the local order parameter~\cite{omelai22,lai14a},
which by definition satisfies $|z(x,t)|\le 1$.
Moreover, the dynamics on this manifold
is described by the neural field equation
\be
     \pf{z}{t} = \fr{[ i ( \eta_0 + \kappa I(x,t) ) - \Delta ] ( 1 + z )^2}{2} - i \fr{(1 - z)^2}{2},
\label{Eq:MeanField}
\ee
where $\eta_0$ and $\Delta$ are parameters of the Lorentzian distribution $\Lambda(\eta)$
and $I(x,t)$ is an integral term corresponding
to the mean-field synaptic coupling in the discrete system~(\ref{eq:dthetadt}).
In particular, for model~(\ref{eq:dthetadt}), (\ref{eq:Idisc}) we have
\be
     I(x,t) = \int_{-\pi}^\pi G(x - y) \int_{-\infty}^{t}
S(t-s)H\left( z\left(y,s\right) \right) ds\: dy \label{eq:Iint}
\ee
where (for the particular function $P(\theta)$ we use)
\be
   H(z) = 1 - \fr{2}{3} ( z + \overline{z} ) + \fr{1}{6} ( z^2 + \overline{z}^2 )\equiv 1+2\:\Real{[D(z)]}
\label{Def:H}
\ee
and $D(z)=-2z/3+z^2/6$.
On the other hand, for model ~(\ref{eq:dthetadt}),
(\ref{eq:IdiscB}) we have
\be
I(x,t) = \int_{-\pi}^\pi G(x - y) 
H\left( z\left(y,t -\tau-\fr{|x - y|}{c} \right) \right)  dy
\label{eq:IintC}
\ee
where $H(z)$ is again given by~(\ref{Def:H}).
Note that due to periodic boundary conditions
the expression $|x - y|$ in~(\ref{eq:IintC})
means $\min( |x- y|, 2\pi - |x - y| )$,
or equivalently $\arccos( \cos(x-y) )$.

\begin{remark}
In the case of the delay kernel~(\ref{eq:kern}),
we can use the linear chain trick~\cite{an80,hurkir19}
to rewrite the system~(\ref{Eq:MeanField}), (\ref{eq:Iint}) as one with only a discrete delay,
rather than a continuum of delays.
Indeed, for $t > \tau$, we have
\[
   S'(t)=a^{-2}e^{-(t-\tau)/a}-a^{-3}(t-\tau)e^{-(t-\tau)/a}=a^{-2}e^{-(t-\tau)/a}-S(t)/a.
\]
Thus
\be
   \pf{I(x,t)}{t} = \int_{-\pi}^\pi G(x-y) \int_{-\infty}^{t}S'(t-s)H\left( z\left(y,s\right) \right) ds\: dy = [Y(x,t)-I(x,t)]/a
\label{eq:dIdt}
\ee
where
\[
   Y(x,t) \equiv \frac{1}{a}\int_{-\pi}^\pi G(x-y)\int_{-\infty}^{t-\tau}e^{-(t-\tau-s)/a}H\left( z\left(y,s\right) \right) ds\: dy.
\]
Differentiating $Y$ with respect to time we find
\be
   \pf{Y(x,t)}{t}=\frac{1}{a}\left[\int_{-\pi}^\pi G(x-y) H\left( z\left(y,t-\tau\right) \right)dy-Y(x,t)\right]. \label{eq:dydt}
\ee
Thus we have written system~(\ref{Eq:MeanField}), (\ref{eq:Iint}) with the delay kernel~(\ref{eq:kern})
as two non-delayed equations, \eqref{Eq:MeanField} and~\eqref{eq:dIdt},
and one with a discrete delay,~\eqref{eq:dydt}. If we had a higher power of $t-\tau$
multiplying the exponential in~\eqref{eq:kern} we could introduce more intermediate variables
like $Y$, still reducing the original equation to a finite set of non-delayed equations
and one with a discrete delay~\cite{lailon03}, but for the purpose of explaining our
technique we use the kernel~\eqref{eq:kern}.
\label{Remark:ChainTrick}
\end{remark}

\begin{remark}
In the case of the finite-support
kernel~(\ref{eq:kern:finite}),
the integral term~(\ref{eq:Iint})
can be written explicitly
\be
I(x,t) = \frac{1}{a}\int_{-\pi}^\pi G(x - y) \int_{\tau}^{\tau+a}
H\left( z\left(y,t-s\right) \right) ds\: dy.
\label{eq:IintA}
\ee
Obviously, system~(\ref{Eq:MeanField}), (\ref{eq:IintA})
has a continuum of delays but they do not extend infinitely far in the past.
\end{remark}

To understand pattern formation processes
in discrete theta neuron systems~(\ref{eq:dthetadt})
with different types of coupling delay,
we analyze the neural field equation~(\ref{Eq:MeanField})
with the corresponding input current terms.
More specifically, system~(\ref{Eq:MeanField}), (\ref{eq:Iint}) with delay kernel~(\ref{eq:kern})
is used to study the activity patterns
in model~(\ref{eq:dthetadt}), (\ref{eq:Idisc})
with distributed delays having infinite support; see Sec.~\ref{sec:inf}.
Analogously, we use system~(\ref{Eq:MeanField}), (\ref{eq:IintA}) to study the activity patterns
in model~(\ref{eq:dthetadt}), (\ref{eq:Idisc})
with distributed delays having finite support; see Sec.~\ref{sec:compact}.
Finally, model~(\ref{eq:dthetadt}),
(\ref{eq:IdiscB}) with conduction delays
is analyzed using system~(\ref{Eq:MeanField}),
(\ref{eq:IintC}); see Sec.~\ref{sec:cond}.

We consider each of the above continuum limit systems
with the simplest non-constant coupling function
\be
G(x)=A+B\cos{x},
\label{Def:G}
\ee
where $A$ and $B$ are constants. Depending on their values, the coupling could be purely
excitatory, purely inhibitory, or ``Mexican hat'' (positive for small $|x|$ and
negative for large $|x|$)~\cite{erm98,roxbru05}. Our general analysis does not rely
on having only one harmonic in $G(x)$, but including more harmonics would increase
the computational effort commensurately.
Using our previous results from~\cite{omelai22},
we choose the parameters $A$, $B$, $\kappa$, $\eta_0$ and~$\Delta$ such that in model~(\ref{Eq:MeanField})
with a non-delayed input current
\be
I(x,t) = \int_{-\pi}^\pi G(x - y) H( z(y,t) )  dy,
\label{eq:Iint0}
\ee
a spatially uniform steady state
and a stationary bump state coexist stably.
Then we show how linear stability analysis
can be performed for these time-independent states 
in the presence of delayed input current terms.
This allows us to identify a number of Hopf bifurcation
curves on the $(\tau,a)$-plane for the input current~(\ref{eq:Iint}) with kernels~(\ref{eq:kern})
and~(\ref{eq:kern:finite}),
and on the $(\tau,c)$-plane
for the input current~(\ref{eq:IintC}).
We observe that Hopf bifurcations of a spatially uniform steady state typically give rise to traveling waves that represent moving bumps
in the corresponding discrete theta neuron systems.
(Note that in the degenerate case,
a spatially uniform periodic solution
may appear instead of a traveling wave.)
On the other hand, Hopf bifurcations of stationary bump states typically give rise to periodic solutions
of neural field equations
that represent breathing bumps.
To calculate entire branches of traveling waves
and periodic spatially nonuniform solutions,
we derive self-consistency equations
that significantly speed up these calculations.
As a result, we obtain a global picture of the influence
of different delays on pattern formation processes
in spatially extended networks of theta neurons.

Note that since the values of stationary states do not depend on the delay parameters that
we vary, we do not expect to observe stationary bifurcations such as a saddle-node bifurcation
as delay parameters are varied. While all of our analysis involves the continuum limit equation~\eqref{Eq:MeanField},
we show some solutions of the discrete network~\eqref{eq:dthetadt} in Appendix~\ref{sec:disc}.
Before considering the effects of delays we briefly discuss the behavior without
delays.

\section{System without delays}
\label{sec:old}

The neural field equation~\eqref{Eq:MeanField}
with a non-delayed input current~\eqref{eq:Iint0}
was analyzed in~\cite{omelai22}.
In that paper, it was shown that Eqs.~\eqref{Eq:MeanField}, \eqref{eq:Iint0}
have various steady state solutions,
which can be represented in the form $Z(x) = U(w(x),\Delta)$, where 
\[
   w(x)=\eta_0+\kappa\int_{-\pi}^\pi G(x-y)H(Z(y)) dy
\]
is the corresponding spatial profile
of input current to neurons at position $x$, and
\[
   U(w,\Delta)=\frac{1-\sqrt{w+i\Delta}}{1+\sqrt{w+i\Delta}}
\]
with
\[
   \sqrt{w+i\Delta}=\frac{1}{\sqrt{2}}\left(\sqrt{\sqrt{w^2+\Delta^2}+w}+i\sqrt{\sqrt{w^2+\Delta^2}-w}\right).
\]
It is easy to see that for self-consistency
$w(x)$ must satisfy the following equation:
\be
  w(x)=\eta_0+\kappa\int_{-\pi}^\pi G(x-y)H(U(w(y),\Delta)) dy. \label{eq:self}
\ee
Using the form of $G(x)$, see~\eqref{Def:G},
and the translational invariance of the system, we can write
\[
   w(x)=\hat{w}_0+\hat{w}_1\cos{x}
\] 
for some constants $\hat{w}_0$ and $\hat{w}_1$,
and inserting this into~\eqref{eq:self} and equating the constant terms and the terms proportional
to $\cos{x}$ we find 
\begin{eqnarray}
    \hat{w}_0 & = & \eta_0+2\pi A\kappa\langle H(U(\hat{w}_0+\hat{w}_1\cos{x},\Delta))\rangle,
    \label{eq:selfB} \\[2mm]
    \hat{w}_1 & = & 2\pi B\kappa\langle H(U(\hat{w}_0+\hat{w}_1\cos{x},\Delta))\cos{x}\rangle,
    \label{eq:selfC}
\end{eqnarray}
where angle brackets indicate an average over $[-\pi,\pi]$.

Eqs.~\eqref{eq:selfB}--\eqref{eq:selfC}
are a pair of equations for the two unknowns $\hat{w}_0$ and $\hat{w}_1$.
Obviously, for $\hat{w}_1=0$, Eq.~\eqref{eq:selfC}
is always satisfied and Eq.~\eqref{eq:selfB}
can be used to obtain a $\hat{w}_0$-parametric representation of spatially uniform states.
For the parameter values given in Table~\ref{tab:param} there are three spatially uniform
states, $Z_1\approx 0.6024 - 0.7101i,Z_2\approx 0.5239-0.0753i$ and $Z_3\approx 0.0795 - 0.0138i$. 
For these the mean firing frequencies are $f_1\approx 0.0138,f_2\approx 0.0984$ and
$f_3\approx 0.2713$, where the frequency is given by~\cite{lai15,monpaz15}
\be
   f=\frac{1}{\pi}\Real\left(\frac{1-\overline{z}}{1+\overline{z}}\right). \label{eq:freq}
\ee
Applying the linear stability analysis
developed in~\cite{omelai22}, it can be shown that
$Z_1$ and $Z_3$ are stable, and $Z_2$ is unstable.

\begin{table}
\footnotesize
\caption{Parameter values}
\begin{center}
\begin{tabular}{|c||ccccccccccc|} \hline
Parameter\vphantom{$\int_0^1$} & & $\eta_0$ & & $\Delta$ & & $\kappa$ & & $A$ & & $B$ & \\ \hline
Value\vphantom{$\int_0^1$} & & $-0.4$ & & 0.04 & & 2 & & 0.1 & & 0.3 & \\ \hline
\end{tabular}
\end{center}
\label{tab:param}
\end{table}

The solutions of Eqs.~\eqref{eq:selfB}--\eqref{eq:selfC}
with $\hat{w}_1\neq 0$ describe steady state bump solutions of~\eqref{Eq:MeanField}, \eqref{eq:Iint0}.
For the parameters in Table~\ref{tab:param} there is a stable bump solution of~\eqref{Eq:MeanField}, \eqref{eq:Iint0} corresponding to $(\hat{w}_0,\hat{w}_1)\approx(0.4246,0.9282)$
and an unstable bump corresponding to $(\hat{w}_0,\hat{w}_1)\approx(0.6336,0.4171)$.

In the following sections, we will consider what happens
to several of the above stationary states in the presence
of delayed terms in Eq.~\eqref{Eq:MeanField}.

\section{Distributed delays --- infinite support}
\label{sec:inf}

In this section, we consider the neural field equation~(\ref{Eq:MeanField}) with an input current term
containing distributed delays~(\ref{eq:Iint}).
More specifically, we analyze the case
of a delay kernel~(\ref{eq:kern}).
Another kernel with finite support~(\ref{eq:kern:finite})
will be analyzed in Sec.~\ref{sec:compact}.
We perform the following steps.
First, we conduct a linear stability analysis
of stationary states in Eqs.~(\ref{Eq:MeanField}), (\ref{eq:Iint}).
This helps us to identify the bifurcations
corresponding to the onset of traveling wave solutions
and periodic solutions with spatial structure.
For each of these types of solution, we derive a self-consistency equation
and use it to efficiently find the entire solution branch.

\subsection{Stability of stationary states} 
\label{sec:stabinf}

We start by considering the stability of stationary states,
generalizing the results in~\cite{omelai22}.
Suppose that Eqs.~(\ref{Eq:MeanField}), (\ref{eq:Iint})
have a time-independent solution $Z(x)$.
Letting $z(x,t)=Z(x)+v(x,t)$, to linear order in $v$ we have
\begin{eqnarray}
\frac{\partial v}{\partial t} &=& \mu(x)v
+ i \kappa \zeta(x) \label{eq:lin}\\
&\times& \int_{-\pi}^\pi G(x-y)\int_{-\infty}^{t}S(t-s)
\left[D'(Z(y))v(y,s)+\overline{D'(Z(y))}\overline{v(y,s)}\right]ds\ dy,
\nonumber
\end{eqnarray}
where
\be
  \mu(x)=\left[i\left(\eta_0+\kappa\int_{-\pi}^\pi G(x-y)H(Z(y))\:dy\right)-\Delta\right](1+Z(x))+i(1-Z(x))
  \label{Def:mu}
\ee
and
\be
\zeta(x) = (1+Z(x))^2/2
\label{Def:zeta}
\ee
and
\be
D'(z) = -2/3 + z/3.
\label{Def:D_}
\ee
Letting
\[
   v(x,t)=v_+(x)e^{\lambda t}+\overline{v}_-(x)e^{\overline{\lambda}t}
\]
and evaluating the integral over $s$ in~\eqref{eq:lin} we obtain
\begin{eqnarray*}
\lambda v_+(x)e^{\lambda t} &+& \overline{\lambda}\overline{v}_-(x)e^{\overline{\lambda}t}
= \mu(x)\left[v_+(x)e^{\lambda t}+\overline{v}_-(x)e^{\overline{\lambda}t}\right] \\
&+& i \kappa \zeta(x) \int_{-\pi}^\pi G(x-y)\left[D'(Z(y))\left\{v_+(y)\frac{e^{\lambda (t-\tau)}}{(1+a\lambda)^2}+\overline{v}_-(y)\frac{e^{\overline{\lambda} (t-\tau)}}{(1+a\overline{\lambda})^2}\right\} \right.\\
&+& \left.\overline{D'(Z(y))}\left\{\overline{v}_+(y)\frac{e^{\overline{\lambda} (t-\tau)}}{(1+a\overline{\lambda})^2}+v_-(y)\frac{e^{\lambda (t-\tau)}}{(1+a\lambda)^2}\right\}\right]dy.
\end{eqnarray*}
Then separately equating the terms in
$e^{\lambda t}$ and $e^{\overline{\lambda} t}$ we obtain
\be
   \lambda\bp v_+ \\ v_- \ep = \mathcal{L}(\lambda)\bp v_+ \\ v_- \ep,
\label{eq:stabA}
\ee
where
\begin{eqnarray*}
\mathcal{L}(\lambda)\bp v_+ \\ v_- \ep &=& \bp \mu(x) & 0 \\ 0 & \overline{\mu(x)} \ep \bp v_+ \\ v_- \ep \\
&+& \frac{i\kappa e^{-\lambda\tau}}{(1+a\lambda)^2}
\bp \zeta(x) & \zeta(x) \\ - \overline{\zeta(x)} & -\overline{\zeta(x)} \ep 
\bp \int_{-\pi}^\pi G(x-y)D'(Z(y))v_+(y)\: dy \\ \int_{-\pi}^\pi G(x-y)\overline{D'(Z(y))}v_-(y)\: dy \ep.
\end{eqnarray*}
Here, as usual, we say that $\lambda\in\mathbb{C}$
is the spectral value of Eq.~\eqref{eq:stabA} if the operator $\lambda\mathcal{I} - \mathcal{L}(\lambda)$,
where $\mathcal{I}$ is the identity operator,
is not invertible. If, in addition, $\lambda\mathcal{I} - \mathcal{L}(\lambda)$ is a Fredholm operator
of index zero, we say that $\lambda$ belongs
to the discrete spectrum $\sigma_\mathrm{discr}$,
while otherwise $\lambda$ belongs to the essential spectrum $\sigma_\mathrm{ess}$.

The essential spectrum corresponding to $Z(x)$
is determined by the first term
in the definition of $\mathcal{L}(\lambda)$
and has the form
\[
   \sigma_\mathrm{ess}=\{\mu(x): x\in[0,2\pi]\} \cup \{\mathrm{c.c.}\},
\]  
where $\{\mathrm{c.c.}\}$ is the complex conjugate of the previous term. As for the discrete spectrum,
it can be found as follows.
Because of the form of $G(x)$, see~\eqref{Def:G}, we can write
\be
   \bp \int_{-\pi}^\pi G(x-y)D'(Z(y))v_+(y)\: dy \\ \int_{-\pi}^\pi G(x-y)\overline{D'(Z(y))}v_-(y)\: dy \ep =V_1+V_2\cos{x}+V_3\sin{x} \label{eq:formofG}
\ee
where $V_{1,2,3}\in\mathbb{C}^2$. Substituting~\eqref{eq:formofG} into~\eqref{eq:stabA} we have
\[
   \bp \lambda-\mu(x) & 0 \\ 0 & \lambda-\overline{\mu(x)} \ep \bp v_+ \\ v_- \ep =
\frac{i\kappa e^{-\lambda\tau}}{(1+a\lambda)^2}
\bp \zeta(x) & \zeta(x) \\ - \overline{\zeta(x)} & -\overline{\zeta(x)} \ep
\left[V_1+V_2\cos{x}+V_3\sin{x}\right]
\]
and thus, assuming $\lambda\notin\sigma_\mathrm{ess}$, solving this equation for $v_+$ and $v_-$ and multiplying on the 
left by a diagonal matrix containing the derivatives of $D$ we have
\be
   \bp D'(Z(x)) v_+ \\ \overline{D'(Z(x))}v_- \ep = L(x,\lambda)\left[V_1+V_2\cos{x}+V_3\sin{x}\right] \label{eq:vv}
\ee
where
\[
   L(x,\lambda)=\bp D'(Z(x)) & 0 \\ 0 & \overline{D'(Z(x))} \ep \bp \lambda-\mu(x) & 0 \\ 0 & \lambda-\overline{\mu(x)} \ep^{-1} \frac{i\kappa e^{-\lambda\tau}}{(1+a\lambda)^2}
\bp \zeta(x) & \zeta(x) \\ - \overline{\zeta(x)} & -\overline{\zeta(x)} \ep
\]
is a $2\times 2$ matrix. Writing $y$ instead of $x$ in~\eqref{eq:vv}, then multiplying 
both sides by $G(x-y)$ and integrating $y$ from $-\pi$ to $\pi$, and using~\eqref{eq:formofG}
and the form of $G(x)$, we obtain
\begin{align*}
   V_1+V_2\cos{x}+V_3\sin{x} & = \int_{-\pi}^\pi G(x-y)L(y,\lambda)\left[V_1+V_2\cos{y}+V_3\sin{y}\right] dy \\
   & =  A\int_{-\pi}^\pi L(y,\lambda)\left[V_1+V_2\cos{y}+V_3\sin{y}\right] dy \\
   & +  B\cos{x}\int_{-\pi}^\pi L(y,\lambda)\cos{y} \left[V_1+V_2\cos{y}+V_3\sin{y}\right] dy \\
   & +  B\sin{x}\int_{-\pi}^\pi L(y,\lambda)\sin{y} \left[V_1+V_2\cos{y}+V_3\sin{y}\right] dy.
\end{align*}
Matching coefficients of the constant, $\cos{x}$ and $\sin{x}$ terms, 
this gives a set of three simultaneous homogeneous linear equations for $V_{1,2,3}$
that will have a non-zero solution if the determinant of $I_6-C(\lambda)$ is zero, where
$I_6$ is the $6\times 6$ identity matrix and 
\[
   C(\lambda)=2\pi \bp A\la L(x,\lambda) \ra & A\la L(x,\lambda)\cos{x} \ra & A\la L(x,\lambda)\sin{x} \ra \\
    B \la L(x,\lambda)\cos{x} \ra & B\la L(x,\lambda)\cos^2{x} \ra & B\la L(x,\lambda)\cos{x}\sin{x} \ra \\ 
    B \la L(x,\lambda)\sin{x} \ra & B\la L(x,\lambda)\sin{x}\cos{x} \ra & B\la L(x,\lambda)\sin^2{x} \ra \ep
\]
and angle brackets indicate an average over $[-\pi,\pi]$, applied to each component
of a matrix separately. Solutions of the equation
\be
   |I_6-C(\lambda)|=0
\label{Eq:sigma:discr}
\ee
determine the discrete spectrum $\sigma_\mathrm{disc}$
associated with the stability of the stationary state $Z(x)$.

\begin{remark}
If $Z(x)$ is even, i.e. $Z(-x) = Z(x)$, then $L(x,\lambda)$ is also even, and thus some components of $C(\lambda)$ are zero, giving
\[
   C(\lambda) =\bp C_{11}(\lambda) & C_{12}(\lambda) & 0 \\ C_{21}(\lambda) & C_{22}(\lambda) & 0 \\ 0 & 0 & C_{33}(\lambda) \ep.
\]
In this case, Eq.~\eqref{Eq:sigma:discr} is decomposed
into two lower dimensional equations
\[
   \left|I_4-\bp C_{11}(\lambda) & C_{12}(\lambda) \\ C_{21}(\lambda) & C_{22}(\lambda) \ep\right|=0
\]
and
\[
   |I_2-C_{33}(\lambda)|=0.
\]
\label{Remark:Z:Even}
\end{remark}

\begin{remark}
If $Z(x)$ is constant, then $L(x,\lambda)$ also does not depend on $x$, and therefore
\[
   C(\lambda) = \bp 2 \pi A L(\lambda) & 0 & 0 \\ 0 & \pi B L(\lambda) & 0 \\ 0 & 0 & \pi B L(\lambda) \ep.
\]
In this case, Eq.~\eqref{Eq:sigma:discr} is decomposed
into two lower dimensional equations
\be
   |I_2-2 \pi A L(\lambda)|=0
\label{eq:eigA}
\ee
and
\be
   |I_2-\pi B L(\lambda)|=0.
\label{eq:eigB}
\ee
(Note that Eq.~\eqref{eq:eigB} appears in two identical copies.)
Importantly, only eigenvalues from Eq.~\eqref{eq:eigB}
are associated with an instability to a spatially non-uniform state. The corresponding bifurcation
leads to the creation of traveling waves. There are no bifurcations to stationary patterns, as explained in Appendix~\ref{sec:noturing}.
\label{Remark:Z:Constant}
\end{remark}

Every stationary state of Eqs.~\eqref{Eq:MeanField}, \eqref{eq:Iint0}
is simultaneously a stationary state of Eqs.~(\ref{Eq:MeanField}), (\ref{eq:Iint}),
but its stability may be different
for different types of kernel $S(t)$.
For example, for a spatially uniform state $Z$
its stability is determined by the position
of eigenvalues given by the characteristic
equations~\eqref{eq:eigA} and~\eqref{eq:eigB}.

Consider the two spatially uniform states~$Z_1$ and~$Z_3$,
presented in Section~\ref{sec:old},
which are stable in the absence of delays
(i.e. for $\tau = 0$ and $a\to 0$).
Varying $a$ and $\tau$ we find that $Z_1$ never loses stability, but $Z_3$ does, 
via its discrete spectrum. Thus we
concentrate on this state.
To find solutions of~\eqref{eq:eigB} we evaluate $\Real (I_2-\pi BL(\lambda))$
and $\Imag (I_2-\pi BL(\lambda))$ for a range of values of $\lambda$ in the complex plane.
We plot the zero contours of both $\Real (I_2-\pi BL(\lambda))$ and $\Imag (I_2-\pi BL(\lambda))$ ---
solutions of~\eqref{eq:eigB} are where these curves intersect. To find where $Z_3$ loses
stability we solve the three equations $\Real (I_2-\pi BL(\lambda))=0$,
$\Imag (I_2-\pi BL(\lambda))=0$ and $\Real(\lambda)=0$, where the three unknowns are
$\Real(\lambda),\Imag(\lambda)$ and a parameter such as $\tau$.

Following where the eigenvalues
given by~\eqref{eq:eigB}, evaluated at $Z=Z_3$, are a purely imaginary pair 
gives the blue curves in Fig.~\ref{fig:atau}. These curves have some form of self-similarity.
Indeed, if $|I_2-\pi BL(i\omega)|=0$ for some delay $\tau=\hat{\tau}$, where $\omega$ is real,
then $|I_2-\pi BL(i\omega)|=0$ when the delay $\tau=\hat{\tau}+2\pi n/\omega$ for any
integer $n$, i.e.~this instability will reoccur at infinitely many values of the delay.
So given one of the blue curves shown in Fig.~\ref{fig:atau}, all of them can be drawn
(assuming that the imaginary part of the eigenvalue along a curve is known).
This is the same phenomenon as the reappearance of periodic solutions in systems with
one delay~\cite{yanper09}: if a periodic solution with period $T_0$ occurs when the delay
is $\tau_0$, then the same periodic solution occurs when the delay is $\tau_0+nT_0$ for any
integer $n$~\cite{laikra25}.

(We also followed where the eigenvalues
given by~\eqref{eq:eigA}, evaluated at $Z=Z_3$, are a purely imaginary pair. These curves lie in
the white region of Fig.~\ref{fig:atau}, where $Z_3$ is already unstable, so are not
associated with the creation of stable solutions.) 

%%%%%%%%%
\begin{figure}[htbp]
\centering
\includegraphics[width=0.8\textwidth]{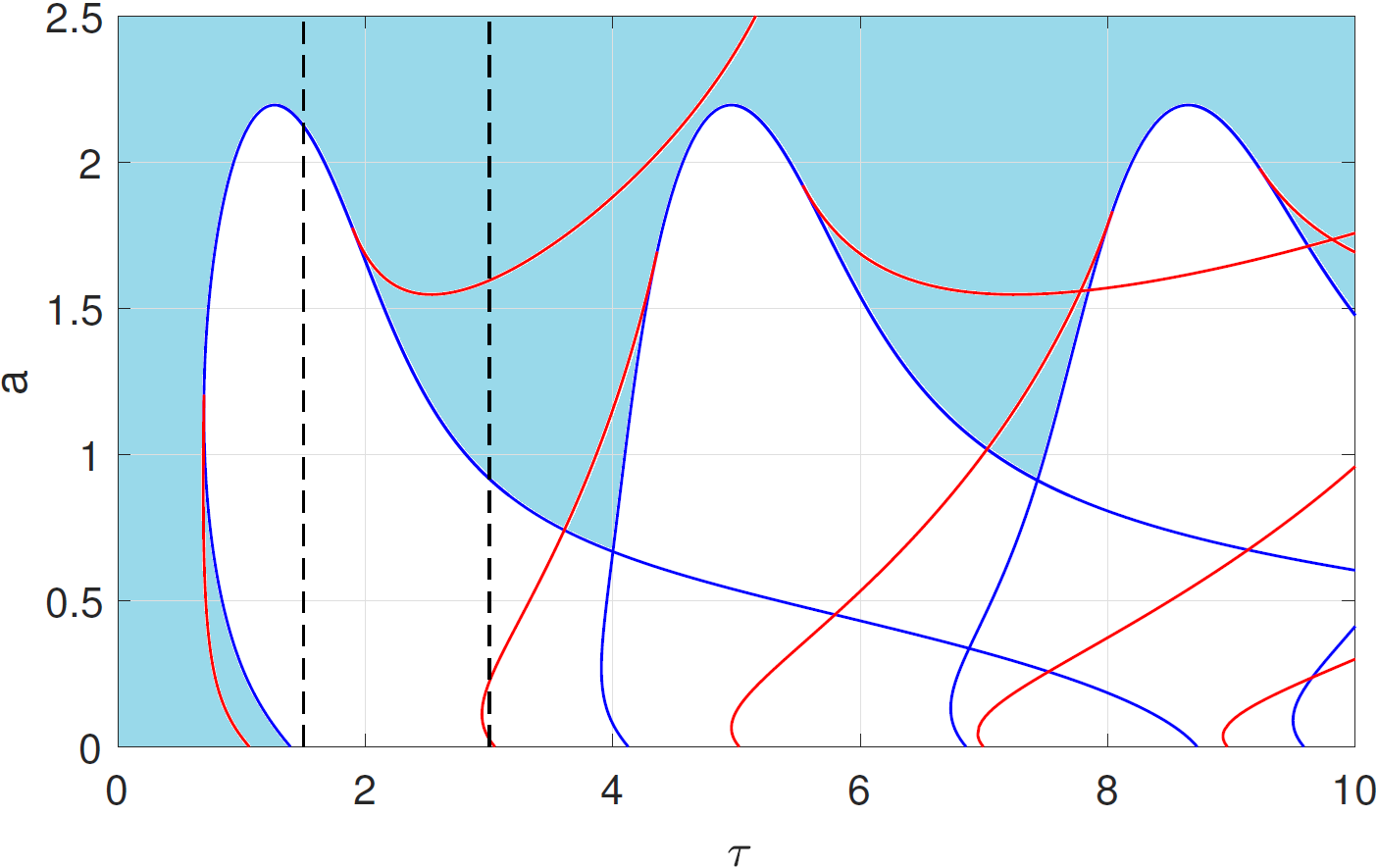}
\caption{Blue: Hopf bifurcation (for eigenvalues satisfying~\eqref{eq:eigB}) of the spatially uniform state $Z_3$.
Red: saddle-node bifurcations of traveling wave solutions of~\eqref{Eq:MeanField}--\eqref{eq:Iint}.
 The spatially uniform state $Z_3$ is stable in the blue shaded region. The dashed vertical lines refer
 to Fig.~\ref{fig:varya}.
Parameters as in Table~\ref{tab:param}.
}
\label{fig:atau}
\end{figure}
%%%%%%%%%%%%%%

Next, we examine the stability of the stable bump
from Section~\ref{sec:old}
as delay parameters $a$ and $\tau$ are varied.
Using Remark~\ref{Remark:Z:Even},
we decompose the characteristic equation~\eqref{Eq:sigma:discr}
into two lower dimensional equations.
As above, we can numerically find the eigenvalues in the discrete spectrum
associated with the stability of a bump,
and following Hopf bifurcations of the stable bump,
we obtain the blue curves in Fig.~\ref{fig:hopfbump}.
As in Fig.~\ref{fig:atau}, the curves of Hopf bifurcations are mapped to one another under a transformation involving the imaginary part of the eigenvalues.

%%%%%%%%%
\begin{figure}[htbp]
\centering
\includegraphics[width=0.8\textwidth]{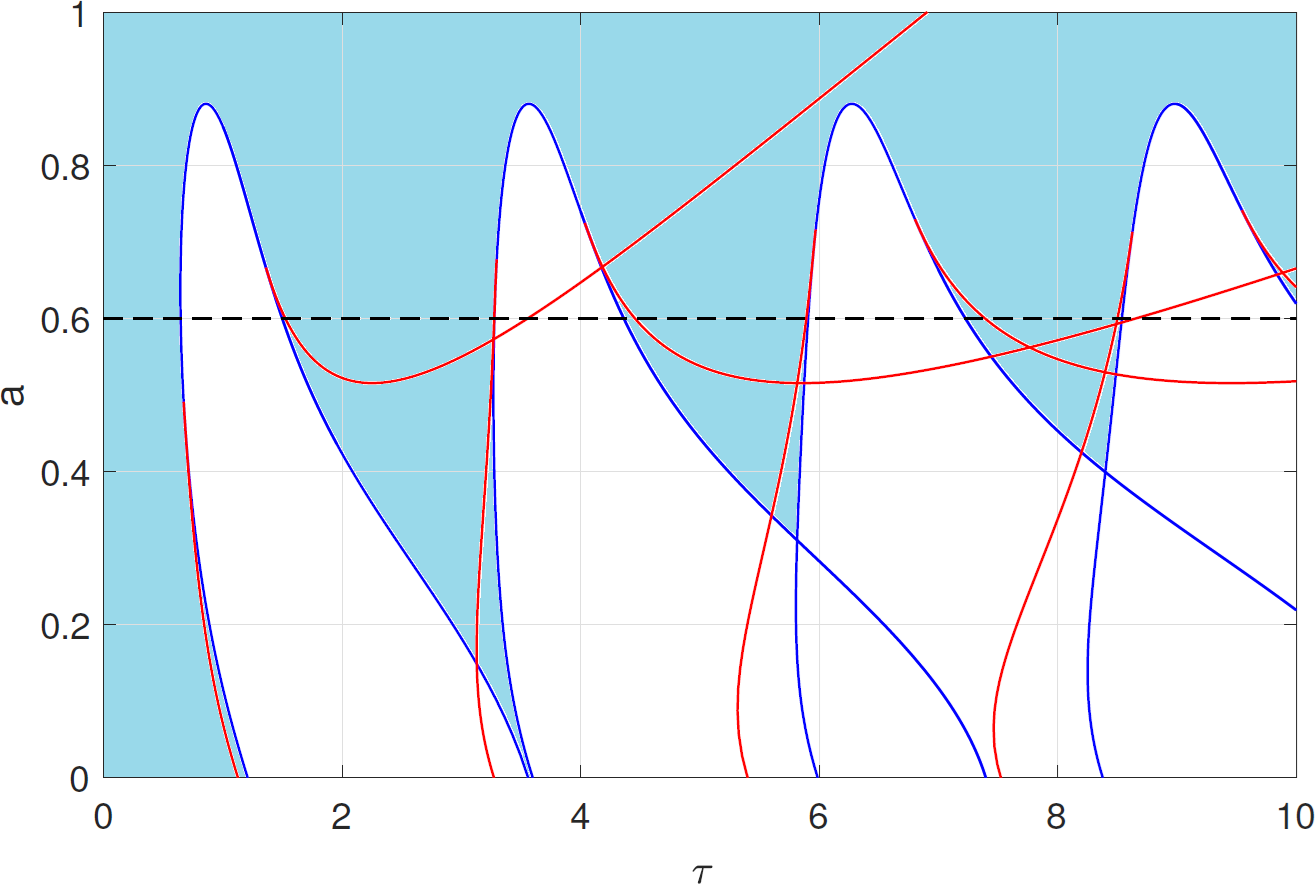}
\caption{Blue: Hopf bifurcations of the stable stationary bump. The bump is stable in the blue shaded region.
 Red curves show saddle-node bifurcations of periodic
solutions. The dashed black line refers to Fig.~\ref{fig:vtau}.
Parameters as in Table~\ref{tab:param}.
}
\label{fig:hopfbump}
\end{figure}
%%%%%%%%%%%%%%

\subsection{Traveling waves}
\label{sec:travinf}

Each Hopf bifurcation shown in Fig.~\ref{fig:atau} corresponds
to the emergence of a traveling wave solution
of Eqs.~(\ref{Eq:MeanField}), (\ref{eq:Iint})
from the same spatially uniform state.
To show this and to find the entire branch
of traveling waves, we use a special semi-analytical approach.

First, we recall that according to Remark~\ref{Remark:ChainTrick},
system~(\ref{Eq:MeanField}), (\ref{eq:Iint})
with delay kernel~(\ref{eq:kern})
is equivalent to the three equations~\eqref{Eq:MeanField}, \eqref{eq:dIdt} and~\eqref{eq:dydt}.
For a traveling wave moving to the left  with a fixed speed $v$,
the profile of the wave in the comoving frame $\xi = x + v t$
is given by a $2\pi$-periodic solution of
\begin{equation}
v\df{z}{\xi} = \fr{[ i ( \eta_0 + \kappa I(\xi)) - \Delta ] ( 1 + z )^2}{2} - i \fr{(1 - z)^2}{2}
\label{Eq:MeanFieldA}
\end{equation}
where
\be
   v\fr{dI}{d\xi}=(Y-I)/a \label{eq:dIdeta}
\ee
and
\be
   v\fr{dY}{d\xi}=\frac{1}{a}\left[\int_{-\pi}^\pi G(\xi-y) H\left( z\left(y-v\tau\right) \right)dy-Y\right]. \label{eq:dYdeta}
\ee
Using the form of $G(x)$, see~\eqref{Def:G}, we can write~\eqref{eq:dYdeta} as
\begin{eqnarray}
v\fr{dY}{d\xi} &=& \frac{1}{a}\left[A\int_{-\pi}^\pi H( z(y-v\tau))dy
+ B\cos{\xi}\int_{-\pi}^\pi H( z(y-v\tau))\cos{y}\:dy \right. \label{eq:dYdetaA} \\
&+& \left. B\sin{\xi}\int_{-\pi}^\pi H( z(y-v\tau))\sin{y}\:dy-Y\right], \nonumber
\end{eqnarray}
the $2\pi$-periodic solution of which is 
$Y(\xi)=\alpha+\beta\cos{\xi}+\gamma\sin{\xi}$
for some constants $\alpha,\beta$ and $\gamma$ 
that depend in a simple explicit way on the values of the integrals
in~\eqref{eq:dYdetaA}. Given this form of $Y$, the $2\pi$-periodic solution
of~\eqref{eq:dIdeta} is $I(\xi)=\delta+\epsilon\cos{\xi}+\phi\sin{\xi}$
for some constants $\delta,\epsilon$ and $\phi$ that depend in a simple and explicit way
on the coefficients in $Y$.

An important ingredient of the following analysis
is a special property of Eq.~(\ref{Eq:MeanFieldA}), described in~\cite{ome23}:
for any $v, \Delta > 0$, any $2\pi$-periodic coefficient $I(\xi)$,
and any values of the coefficients $\eta_0$ and $\kappa$,
Eq.~(\ref{Eq:MeanFieldA}) has one $2\pi$-periodic solution $z(\xi)$
such that $|z(\xi)| < 1$. This solution is not explicitly known,
but it can be easily computed using the fact
that~(\ref{Eq:MeanFieldA}) is a periodically-forced Riccati equation.
Put briefly, one numerically integrates~\eqref{Eq:MeanFieldA} for $0\leq \xi\leq 2\pi$ 
three times with three different initial conditions.
The three final states are enough to determine the initial condition
of~\eqref{Eq:MeanFieldA} for which the solution is $2\pi$-periodic
(and less than one in magnitude), and this
equation is then integrated a fourth time with this initial condition
to construct the periodic solution, $z(\xi)$.

From the above explanation, it is clear that if we are only interested
in $2\pi$-periodic solutions of~\eqref{Eq:MeanFieldA}--\eqref{eq:dYdeta}
such that $|z(\xi)| < 1$, it is sufficient to consider only $I(\xi)$ as the unknown function,
while the other unknowns $z(\xi)$ and $Y(\xi)$ can be reconstructed from it.
Moreover, for the coupling function~$G(x)$ given by~\eqref{Def:G},
the unknown function~$I(\xi)$ is specified by three scalars,
which significantly reduces the complexity of the problem.
Putting everything together, we have an algorithm
for describing traveling wave solutions of~\eqref{Eq:MeanField}, \eqref{eq:Iint}
with delay kernel~\eqref{eq:kern} in a self-consistent way:
\begin{enumerate}
\item Given values of $\delta,\epsilon$ and $\phi$ describing the profile of $I(\cdot)$ in a snapshot
of a traveling wave solution of~\eqref{Eq:MeanField} for some set of parameter values, 
construct $I(\xi)$ and
find the $2\pi$-periodic solution $z(\xi)$ of~\eqref{Eq:MeanFieldA} with $|z(\xi)|<1$.

\item Given this $z(\xi)$, calculate $H(z(\xi))$ and write it as
\[
   H(z(\xi))=\Omega+\Phi\cos{(\xi)}+\Psi\sin{(\xi)}+\mbox{higher harmonics}
\]
for some known constants $\Omega,\Phi$ and $\Psi$, and perform the integrals
in~\eqref{eq:dYdetaA}.

\item Once the integrals
in~\eqref{eq:dYdetaA} are known, calculate the constants describing $Y(\xi)$: 
$\alpha,\beta$ and $\gamma$.

\item Since $Y(\xi)$ is now known, substitute it into~\eqref{eq:dIdeta} and solve
for $I(\xi)$, which will be of the
form $I(\xi)=\tilde{\delta}+\tilde{\epsilon}\cos{\xi}+\tilde{\phi}\sin{\xi}$.
If $\tilde{\delta}=\delta,  \tilde{\epsilon}=\epsilon$ and $\tilde{\phi}=\phi$ we are done.
\end{enumerate}
Note that due to the translational invariance of the system we can set $\epsilon=0$.
In practice we solve 
\[
   \delta-\tilde{\delta}=0 \qquad \mbox{ and } \qquad \phi-\tilde{\phi}=0
\]
along with the pinning condition $\tilde{\epsilon}=0$, 
giving three equations in three unknowns ($\delta,\phi$ and $v$).
These equations are solved numerically using Newton's method, and solutions can be followed
as parameters are varied using pseudo-arclength continuation~\cite{lai14b,gov00}. 
The algorithm is extremely
quick to run, as most of the computational effort goes into the four numerical integrations
of~\eqref{Eq:MeanFieldA}. Note that our method does not involve discretization of the spatial domain,
and numerical integration of~\eqref{Eq:MeanFieldA} can be performed to whatever accuracy is required simply
by decreasing tolerances in the numerical integrator used. (See below for more
discussion of efficiency.)

%For reference, the $2\pi$-periodic solution of
%\[
%   \frac{dy}{dx}=\tau(a+b\cos{x}+c\sin{x}-y)
%\]
%is
%\[
%   y(x)
%= a + \frac{\tau(b\tau - c)}{\tau^{2}+1}\cos x + \frac{\tau(b + c\tau)}{\tau^{2}+1}\sin x.
%\]

Following solutions of the equations that self-consistently describe
a traveling wave as parameters are varied,
we find that they are created in the previously found Hopf bifurcations
of the spatially uniform state $Z_3$, the locations of which are shown in blue in Fig.~\ref{fig:atau}.
Sometimes the bifurcations are supercritical and sometimes subcritical.
This is shown in Fig.~\ref{fig:varya} where we follow a traveling wave as $a$ is varied
for two  different values of $\tau$.
The stability of the traveling waves was checked empirically.
We used the computed triplet $(z(\xi),I(\xi),Y(\xi))$ as an initial condition
in Eqs.~\eqref{Eq:MeanField}, \eqref{eq:dIdt} and~\eqref{eq:dydt}.
By adding a small perturbation to it, we checked
whether the dynamics of~\eqref{Eq:MeanField}, \eqref{eq:dIdt}, \eqref{eq:dydt}
converges to the predicted traveling wave.
If so, we added a symbol (triangle or circle) to Fig.~\ref{fig:varya}.
In this way we found that for $\tau=1.5$, the spatially uniform state
loses stability supercritically with decreasing $a$
and gives rise to a traveling wave of arbitrarily small amplitude
(dashed line in Fig.~\ref{fig:varya}).
But for $\tau=3$ the bifurcation is subcritical
(solid line in Fig.~\ref{fig:varya}), 
giving a region of bistability between the red and blue curves 
(for this value of $\tau$) in Fig.~\ref{fig:atau}
where the spatially uniform state and a traveling wave are both stable.
Saddle-node bifurcations of traveling waves were also found for other values of $\tau$
and are shown by the red curves in Fig.~\ref{fig:atau}.
 
%But can the periodic solution created
%here be followed in DDE-BIFTOOL? Yes, with 32 grid points, but not very far.

%%%%%%%%%
\begin{figure}[htbp]
\centering
\includegraphics[width=0.8\textwidth]{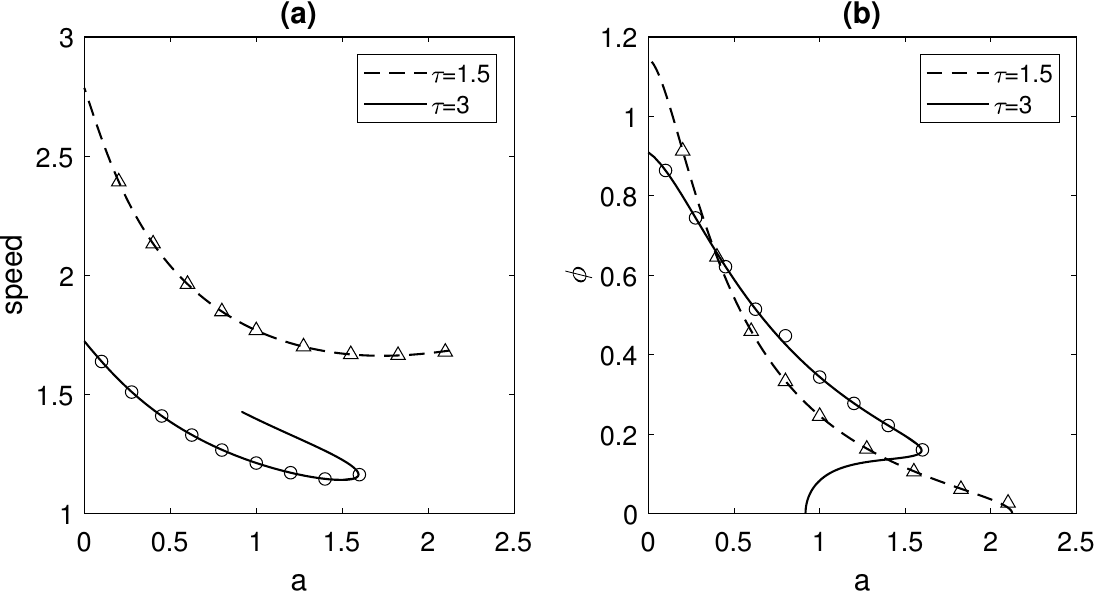}
\caption{(a) Speed of traveling wave solution of~\eqref{Eq:MeanField}--\eqref{eq:Iint} for two
different values of $\tau$ (shown with dashed vertical lines
in Fig.~\ref{fig:atau}). 
(b) $\phi$ (amplitude of pattern) for the solutions shown in panel (a).
The symbols are the results of numerical simulations of~\eqref{Eq:MeanField},~\eqref{eq:dIdt}
and~\eqref{eq:dydt}.
Parameters as in Table~\ref{tab:param}.
}
\label{fig:varya}
\end{figure}

Note that the form of the coupling function $G(x)$, with only the lowest spatial harmonic, 
determines the form of the traveling
waves: a quantity such as firing
frequency can have only one maximum over the domain, at any instant in time. 
A similar approach to that described here has been used to study moving chimeras
in networks of phase oscillators~\cite{ome23}.

\subsection{Periodic solutions with spatial structure}
\label{sec:persln:inf}

We now show that each Hopf bifurcation
in Fig.~\ref{fig:hopfbump} corresponds
to the emergence of a periodic solution
of Eqs.~(\ref{Eq:MeanField}), (\ref{eq:Iint})
from a stationary bump state.
For this, we develop another version
of the self-consistency argument.

Suppose that $z(x,t)$ is a solution of~(\ref{Eq:MeanField}), (\ref{eq:Iint})
such that $z(x,t+T) = z(x,t)$ with some $T>0$.
Then in the case of delay kernel~\eqref{eq:kern},
the triplet $(z(x,t),I(x,t),Y(x,t))$ is a solution
of~\eqref{Eq:MeanField}, \eqref{eq:dIdt}, \eqref{eq:dydt} with the same period~$T$.
Write~\eqref{eq:dydt} as
\begin{eqnarray}
   \pf{Y(x,t)}{t} & = & \frac{1}{a}\left[A\int_{-\pi}^\pi H(z(y,t-\tau))dy
+B\cos{x}\int_{-\pi}^\pi H(z(y,t-\tau))\cos{y}\: dy \right.  \label{eq:dydtA} \\
    & + & \left. B\sin{x}\int_{-\pi}^\pi H(z(y,t-\tau))\sin{y}\: dy-Y(x,t)\right]. \nonumber
\end{eqnarray}
Next, writing $Y(x,t)$ as a Fourier series in space with time-dependent coefficients we see that all
coefficients decay to zero except those multiplying the spatially-constant and $\cos{x}$ and
$\sin{x}$ terms. We shift our solution in space so that the coefficient of the $\sin{x}$ term is 
zero and thus write
\begin{eqnarray}
Y(x,t) &=& a_0 + \sum_{m=1}^M [a_m\cos{(m \gamma t)}+b_m\sin{(m \gamma t)}]  \label{eq:Yser} \\
&+& \left[c_0+\sum_{m=1}^M [c_m\cos{(m \gamma t)}+d_m\sin{(m \gamma t)}]\right]\cos{x} \nonumber
\end{eqnarray}
where $\gamma=2\pi/T$. The input current $I(x,t)$ will have a similar expansion:
\begin{eqnarray*}
   I(x,t) &=& \alpha_0 + \sum_{m=1}^M [\alpha_m\cos{(m \gamma t)}+\beta_m \sin{(m \gamma t)}] \\
&+& \left[\gamma_0+\sum_{m=1}^M [\gamma_m\cos{(m \gamma t)}+\delta_m\sin{(m \gamma t)}]\right]\cos{x}.
\end{eqnarray*}
Note that, strictly speaking, the Fourier expansions of $Y(x,t)$ and $I(x,t)$
contain coefficients with all positive integer indices~$m$,
but since we are only looking for an approximate solution
of~\eqref{Eq:MeanField}, \eqref{eq:dIdt}, \eqref{eq:dydt},
we truncate these series at a finite number~$M$.

Substituting the above expansions into~\eqref{eq:dIdt},
we find $\alpha_0=a_0,\gamma_0=c_0$ and
\begin{align}
   - m \gamma \alpha_m & = ( b_m - \beta_m )/a, \label{eq:alphak} \\[2mm]
   m \gamma \beta_m & = ( a_m - \alpha_m )/a, \\[2mm]
   - m \gamma \gamma_m & = ( d_m - \delta_m )/a, \\[2mm]
   m \gamma \delta_m & = ( c_m - \gamma_m )/a
   \label{eq:deltak}
\end{align}
for $m=1,2,\dots,M$.
These are linear simultaneous equations, so given the coefficients of $Y$,
$\{a_0,\dots, d_M\}$, we can easily find
the coefficients of $I$, $\{\alpha_0,\dots,\delta_M\}$.

Similarly, write the $T$-periodic functions
\be
   A\int_{-\pi}^\pi H\left( z\left(y,t\right) \right)dy=e_0+\sum_{m=1}^M [e_m\cos{(m \gamma t)}+f_m\sin{(m \gamma t)}] \label{eq:Ainta}
\ee
and
\be
  B\int_{-\pi}^\pi H\left( z\left(y,t\right) \right)\cos{y}\: dy= g_0+\sum_{m=1}^M [g_m\cos{(m \gamma t)}+h_m \sin{(m \gamma t)}]. \label{eq:Binta}
\ee
Then the delayed versions of these are
\begin{eqnarray}
 A\int_{-\pi}^\pi H\left( z\left(y,t-\tau\right) \right)dy &=&
e_0 +\sum_{m=1}^M \left\{ [e_m\cos{(m \gamma \tau)}-f_m \sin{(m \gamma \tau)}]\cos{(m \gamma t)} \vphantom{\sum}\right. \label{eq:Aint} \\
&+& \left.\vphantom{\sum} [e_m \sin{(m \gamma \tau)}+f_m \cos{(m \gamma \tau)}]\sin{(m \gamma t)} \right\}
\nonumber
\end{eqnarray}
and
\begin{eqnarray}
B\int_{-\pi}^\pi H\left( z\left(y,t-\tau\right) \right)\cos{y}\: dy &=&
g_0 + \sum_{m=1}^M \left\{ [g_m \cos{(m \gamma \tau)}-h_m \sin{(m \gamma \tau)}]\cos{(m \gamma t)} \vphantom{\sum} \right. \label{eq:Bint} \\
&+& \left.\vphantom{\sum} [g_m \sin{(m \gamma \tau)}+h_m \cos{(m \gamma \tau)}]\sin{(m \gamma t)} \right\}. \nonumber
\end{eqnarray}
Substituting~\eqref{eq:Aint}--\eqref{eq:Bint} and~\eqref{eq:Yser} into~\eqref{eq:dydtA},
we find $a_0=e_0,c_0=g_0$ and
\begin{align}
   - m \gamma a_m & = \left[e_m \sin{(m \gamma \tau)}+f_m\cos{(m \gamma \tau)}-b_m\right]/a, \label{eq:ak} \\[2mm]
   m \gamma b_m & = \left[e_m \cos{(m \gamma \tau)}-f_m \sin{(m \gamma \tau)}-a_m \right]/a, \\[2mm]
   - m \gamma c_m & = \left[g_m \sin{(m \gamma \tau)}+h_m \cos{(m \gamma \tau)}-d_m \right]/a, \\[2mm]
   m \gamma d_m & = \left[g_m \cos{(m \gamma \tau)}-h_m \sin{(m \gamma \tau)}-c_m \right]/a \label{eq:dk}
\end{align}
for $m=1,2,\dots,M$.
These are also simultaneous linear equations so given the coefficients $\{e_0,\dots, f_M\}$ 
and $\{g_0,\dots, h_M\}$ we can easily solve them to find
the coefficients of $Y$, $\{a_0,\dots, d_M\}$.

Thus we have an algorithm for finding approximate periodic solutions 
of~\eqref{Eq:MeanField}, \eqref{eq:dIdt}, \eqref{eq:dydt} in a self-consistent way:
\begin{enumerate}
\item From a simulation of~\eqref{Eq:MeanField}--\eqref{eq:Iint}
for which there is a stable periodic solution, extract one period of $I(x,t)$.
From this, fit the coefficients $\{\alpha_0,\dots,\delta_M\}$. 
Given these coefficients, use them to construct $I(x,t)$
and find the $T$-periodic solution, $z(x,t)$, of~\eqref{Eq:MeanField}
with magnitude $|z(x,t)|<1$. At each value of $x$, $I(x,t)$
is an externally imposed $T$-periodic function,
so the solution at any $x$ can be found
using the same technique as explained in Sec.~\ref{sec:travinf}.
The solutions at different values of $x$ are independent, so these can be found in parallel. 

\item Use this solution to construct $H(z(x,t))$ and evaluate the integrals 
in~\eqref{eq:Ainta}--\eqref{eq:Binta}, and fit the coefficients $\{e_0,\dots, h_M\}$ to these
functions.

\item Given these, solve~\eqref{eq:ak}--\eqref{eq:dk} for $\{a_0,\dots, d_M\}$.

\item Given these, solve~\eqref{eq:alphak}--\eqref{eq:deltak} for $\{\alpha_0,\dots,\delta_M\}$.
We refer to the solutions of these equations as $\{\tilde{\alpha}_0,\dots,\tilde{\delta}_M\}$.
\item If all $\{\tilde{\alpha}_0,\dots,\tilde{\delta}_M\}=\{\alpha_0,\dots,\delta_M\}$
then we are done. There is one more unknown, $T$,
and we append a pinning condition on the solution to find it.
\end{enumerate}

%A typical stable periodic solution is shown in Fig.~\ref{fig:podist}. The firing frequency is
%given by~\eqref{eq:freq}.
%
%%%%%%%%%%
%\begin{figure}[h]
%\includegraphics[width=0.8\textwidth]{podist}
%\caption{A periodic solution of~\eqref{Eq:MeanField}-\eqref{eq:kern}.
%The firing frequency is shown in color.
%Parameters as in Table~\ref{tab:param} and $\tau=1,a=0.6$.
%}
%\label{fig:podist}
%\end{figure}
%%%%%%%%%%%%%%%

Following periodic solutions created in the Hopf bifurcations shown in
Fig.~\ref{fig:hopfbump} for constant $a$ we obtain Fig.~\ref{fig:vtau}.
Each branch of solutions is created at a Hopf bifurcation and ends at the next
Hopf bifurcation. Only one branch has to be found numerically; others
can be plotted using reappearance of periodic solutions in DDEs with fixed
delay~\cite{yanper09}. For this value of $a$ each branch undergoes two saddle-node
bifurcations. Following these as both $\tau$ and $a$ are varied we obtain the
red curves shown in Fig.~\ref{fig:hopfbump}.
Note that the circles in Fig.~\ref{fig:vtau} show the periods
of the solutions of~\eqref{Eq:MeanField}, \eqref{eq:dIdt}, \eqref{eq:dydt}
obtained by direct numerical simulations of this system.
These results demonstrate the good accuracy of the approximation with $M = 5$,
as well as the empirical stability of the solutions found.

%%%%%%%%%
\begin{figure}[htbp]
\centering
\includegraphics[width=0.7\textwidth]{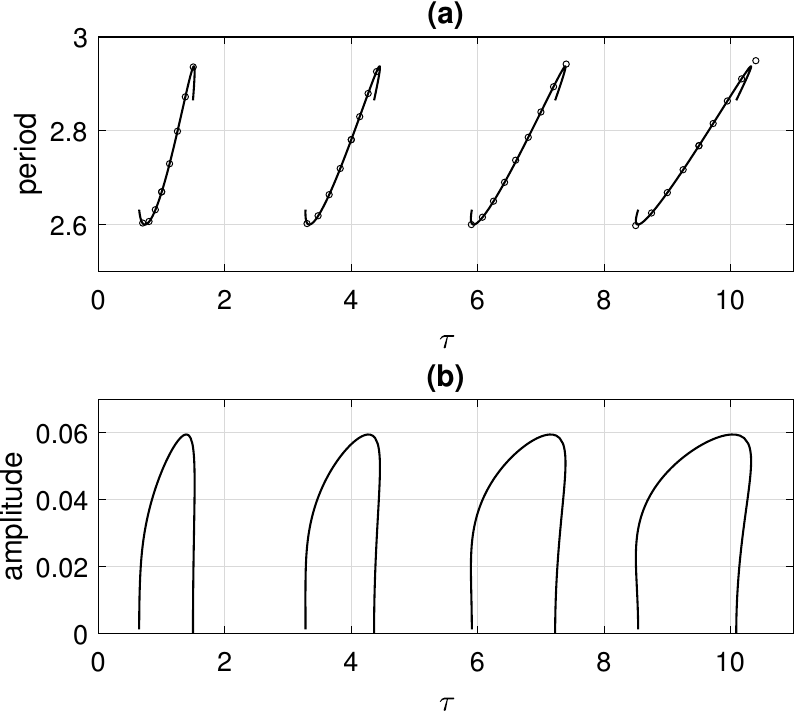}
\caption{Families of periodic solutions created in the Hopf bifurcations
shown in Fig.~\ref{fig:hopfbump} on the black dashed line at $a=0.6$. Panel (a) shows the period while
panel (b) shows the amplitude of oscillations in the spatial mean of $I(x,t)$,
defined as $\left(\sum_{m=1}^M \left[ \alpha_m^2+\beta_m^2 \right] \right)^{1/2}$.
We used $M=5$. The symbols in panel (a) are the results of numerical simulations of~\eqref{Eq:MeanField},~\eqref{eq:dIdt}
and~\eqref{eq:dydt} using 128 spatial points. Parameters as in Table~\ref{tab:param}.
}
\label{fig:vtau}
\end{figure}
%%%%%%%%%%%%%%

Regarding numerical efficiency, following Hopf bifurcations of stationary states is
extremely quick: 500 points on a curve of Hopf bifurcations in Fig.~\ref{fig:atau} 
were found in 0.1s. While following traveling waves,
50 points on a branch of solutions in Fig.~\ref{fig:varya}
were computed in less than 5s, and 50 points on a curve of saddle-node
bifurcations in Fig.~\ref{fig:atau} were found in less that 30s. 
Following periodic solutions and their bifurcations
is slower due to the number of Fourier coefficients used ($4M+2$)
and the need to simulate over the whole spatial domain: 50 points on a branch of periodic
solutions in Fig.~\ref{fig:vtau}
were found in less than 100s, using $M=5$ and 128 spatial points, while 50 points
on a curve of saddle-node bifurcations of periodic solutions
in Fig.~\ref{fig:hopfbump} took around 6 minutes
using the same discretization. These calculations were performed using Matlab on a standard 
desktop computer. Parallel processing was not implemented --- doing so 
significantly speeds up some calculations.

As a comparison we used the DDE-BIFTOOL package~\cite{DDEBIF}, discretizing the spatial
domain into 32 points (quite a coarse discretization), 
giving 128 real-valued DDEs. We did not implement the Jacobian
analytically, instead using finite differences to estimate it. Following a stationary
bump along the black dashed line in Fig.~\ref{fig:hopfbump} was quick, as its profile is
independent of $\tau$. However, finding all of the eigenvalues having
real parts greater than $-0.1$ associated with the bump, for 100 different values of $\tau$,
took around 4 minutes. Once a Hopf bifurcation was found we followed it in the $(\tau,a)$ plane,
tracing out part of a blue curve in Fig.~\ref{fig:hopfbump}. It took around 7 minutes to find
40 points on this curve. DDE-BIFTOOL can follow a branch of periodic solutions created in a
Hopf bifurcation and we did this, using the default approximation of a periodic solution,
to reproduce a curve in Fig.~\ref{fig:vtau}. It took several minutes to find 10 points
on such a curve, and around 10 seconds to find the stability of a single periodic solution
on this curve (a calculation that cannot be done using our method). We did not attempt to
follow saddle-node bifurcations of periodic solutions, given by red curves in
Fig.~\ref{fig:hopfbump}.

Of course, DDE-BIFTOOL is general purpose software and could perform equally well
with any reasonable coupling function $G(x)$. The computational efficiency of our method
relies on (a): the coupling function $G(x)$ having only one harmonic, and (b): the fact that
we are interested in periodic solutions of periodically-forced Riccati equations
and have an efficient way of finding them.

%
%Following the periodic solution shown in Fig.~\ref{fig:podist} as $a$ is varied
%we obtain Fig.~\ref{fig:podistva}.
%
%
%%%%%%%%%%
%\begin{figure}[h]
%\includegraphics[width=0.8\textwidth]{podistva}
%\caption{Period of a periodic solution of~\eqref{Eq:MeanField}-\eqref{eq:kern}.
%Parameters as in Table~\ref{tab:param} and $\tau=4$.
%}
%\label{fig:podistva}
%\end{figure}
%%%%%%%%%%%%%%%
%
%Varying $\tau$ we obtain Fig.~\ref{fig:podistvtau}. There is a loop in the curve
%and it ends by colliding with a solution having $I=a+b\cos{(2\pi t/T)}\cos{x}$
%for some constants $a,b$. 
%
%%%%%%%%%%
%\begin{figure}[h]
%\includegraphics[width=0.8\textwidth]{podistvtau}
%\caption{Period of a periodic solution of~\eqref{Eq:MeanField}-\eqref{eq:kern}.
%Parameters as in Table~\ref{tab:param} and $a=1.2$.
%}
%\label{fig:podistvtau}
%\end{figure}
%%%%%%%%%%%%%%%
%
%Following the saddle-node bifurcation seen in Fig.~\ref{fig:podistva} we obtain Fig.~\ref{fig:snpo}.
%%%%%%%%%%
%\begin{figure}[h]
%\includegraphics[width=0.8\textwidth]{snpo}
%\caption{Curve of saddle-node bifurcations of the periodic solution, followed from the one
%in Fig.~\ref{fig:podistva}.
%Parameters as in Table~\ref{tab:param}.
%}
%\label{fig:snpo}
%\end{figure}
%%%%%%%%%%%%%%%

%There is a stable stationary bump at $\tau=4,a=0.5$. Also at $\tau=0.5,a=1.2$.

\section{Distributed delays --- compact support}
\label{sec:compact}

In this section we consider the neural field equation~(\ref{Eq:MeanField}) with input current containing distributed delays with compact support~(\ref{eq:IintA}).

\subsection{Stability of stationary states}

The stationary states of~\eqref{Eq:MeanField}, \eqref{eq:IintA}
are the same as in Sec.~\ref{sec:old}, 
since their existence is independent of the delay kernel,
provided that it is normalized to unity.
Moreover, the stability analysis of a stationary state $Z(x)$
is performed in the same way as in Sec.~\ref{sec:stabinf}
with the result that now $L(x,\lambda)$ is given by
\begin{eqnarray*}
L(x,\lambda) &=& \bp D'(Z(x)) & 0 \\ 0 & \overline{D'(Z(x))} \ep \bp \lambda-\mu(x) & 0 \\ 0 & \lambda-\overline{\mu(x)} \ep^{-1} \\
&\times& \frac{i\kappa (e^{-\lambda\tau}-e^{-\lambda(\tau+a)})}{\lambda a}
\bp \zeta(x) & \zeta(x) \\ - \overline{\zeta(x)} & -\overline{\zeta(x)} \ep.
\end{eqnarray*}
Following Hopf bifurcations of the spatially uniform state that is stable in
the absence of delays ($Z_3$) we find the blue curves in Fig.~\ref{fig:atauU}.
As in Sec.~\ref{sec:stabinf},
these curves can be mapped to one another
using knowledge of the imaginary part of the eigenvalues.
Similarly, following Hopf bifurcations
of a stable bump we obtain the blue curves in Fig.~\ref{fig:hopfbumpU}.
Note the qualitative similarities with Figs.~\ref{fig:atau} and~\ref{fig:hopfbump}. 
Below we explain how traveling waves
and periodic solutions created in these bifurcations
can be described in a self-consistent way.

%%%%%%%%%
\begin{figure}[htbp]
\centering
\includegraphics[width=0.8\textwidth]{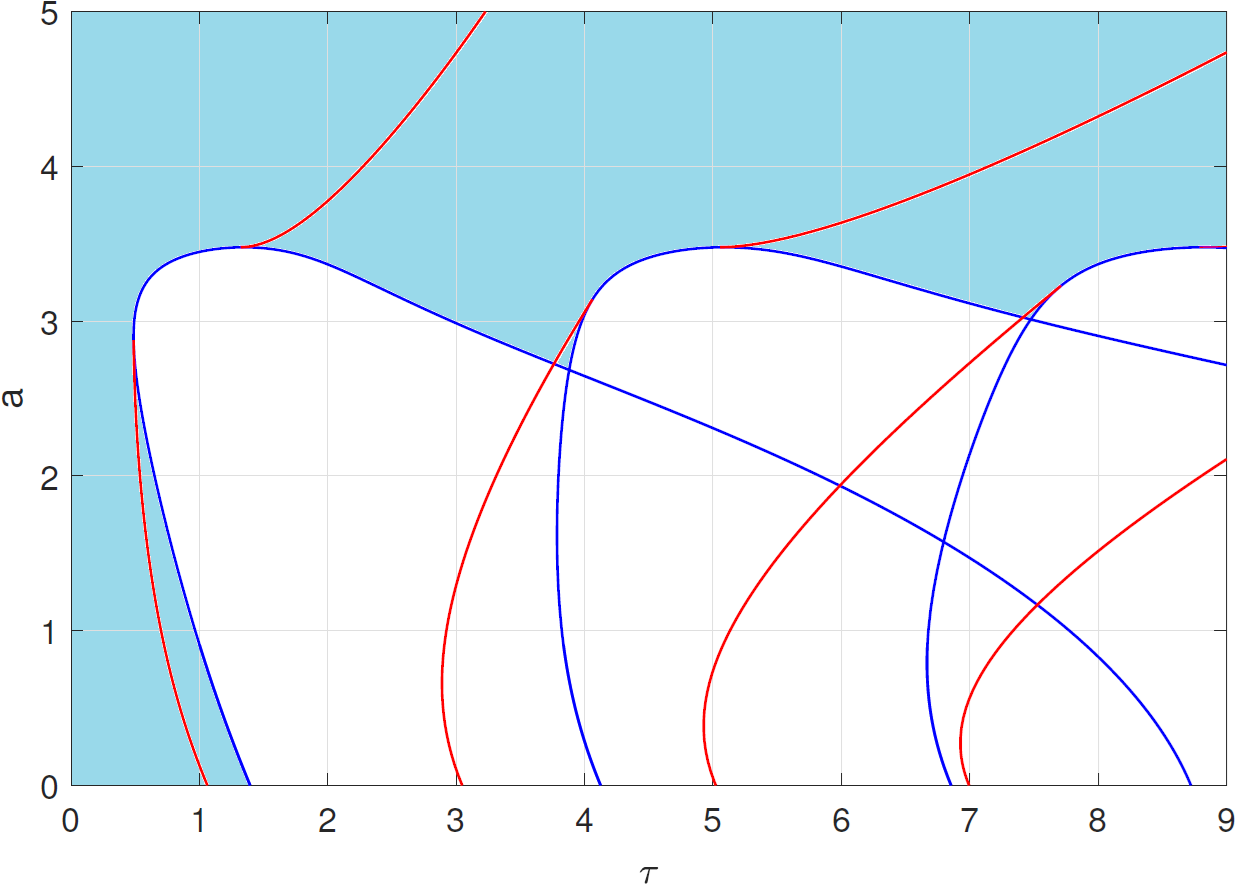}
\caption{Blue: Hopf bifurcations of the spatially uniform steady state $Z_3$
of~\eqref{Eq:MeanField} and~\eqref{eq:IintA}. 
This state is stable in the blue shaded region.
Red: saddle-node bifurcations of
traveling waves.
Parameters as in Table~\ref{tab:param}.}
\label{fig:atauU}
\end{figure}
%%%%%%%%%%%%%%

%%%%%%%%%
\begin{figure}[htbp]
\centering
\includegraphics[width=0.8\textwidth]{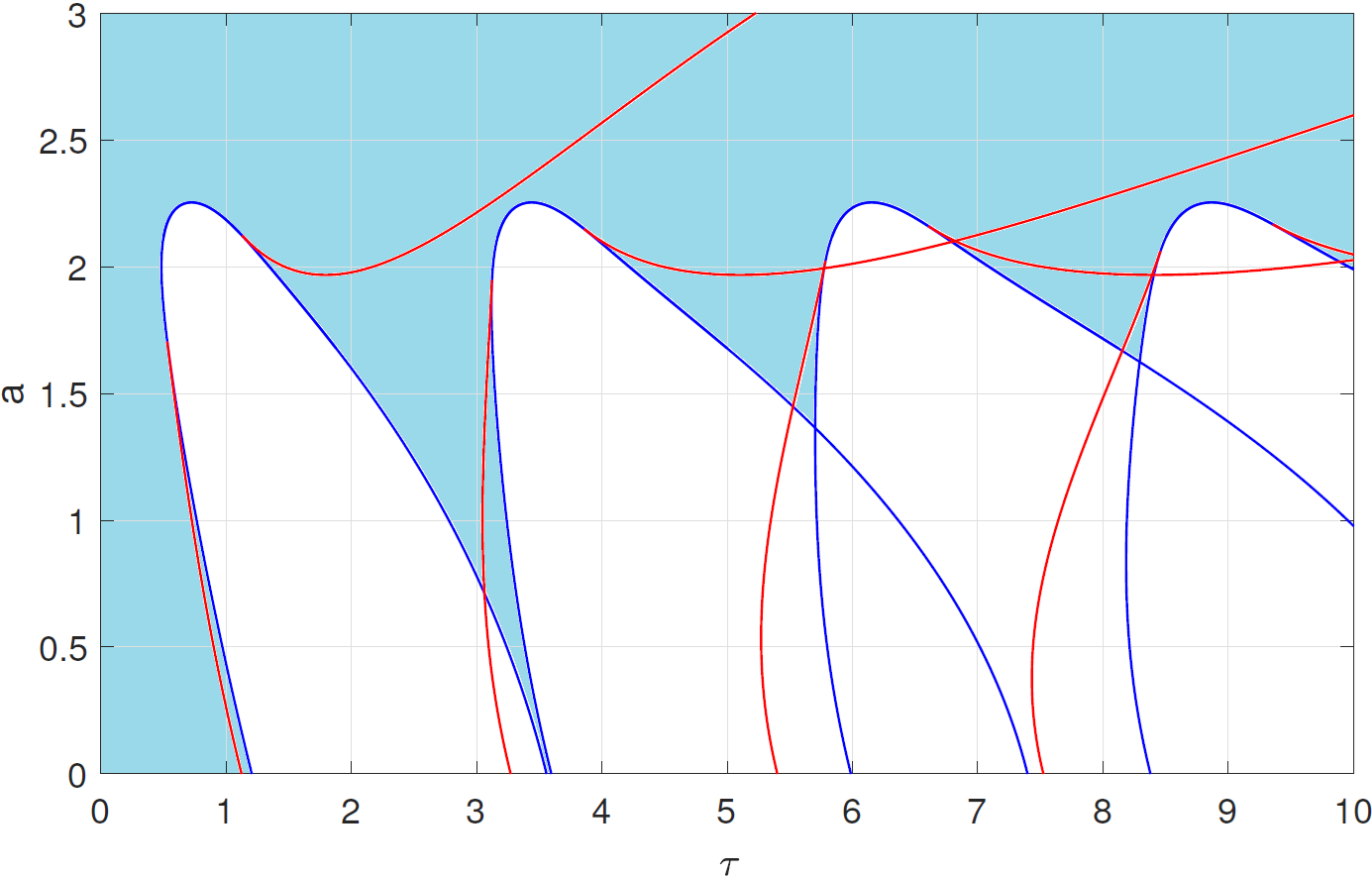}
\caption{Blue: Hopf bifurcations of a stable stationary bump. Red: saddle-node bifurcations of
periodic solutions. Th bump is stable in the blue shaded region.
Parameters as in Table~\ref{tab:param}.}
\label{fig:hopfbumpU}
\end{figure}
%%%%%%%%%%%%%%

\subsection{Traveling waves}
\label{sec:travinf:2}

A traveling wave solution of~\eqref{Eq:MeanField}, \eqref{eq:IintA} 
moving to the left with speed $v$ will have a profile that is a solution of
\begin{equation}
    v\df{z}{\xi} = \fr{[ i ( \eta_0 + \kappa I(\xi)) - \Delta ] ( 1 + z )^2}{2} - i \fr{(1 - z)^2}{2}
\label{Eq:MeanFieldB}
\end{equation}
where
\be
   I(\xi)=\frac{1}{a}\int_{-\pi}^\pi G(\xi-y) \int_{\tau}^{\tau+a} H\left( z\left(y-vs\right) \right)ds\: dy.
\label{eq:Ixi}
\ee
From the form of $G(x)$, see~\eqref{Def:G}, we can write $I(\xi)=\delta+\epsilon\cos{\xi}+\phi\sin{\xi}$
for some constants $\delta,\epsilon$ and $\phi$. Given these constants, we 
can find the $2\pi$-periodic
solution of~\eqref{Eq:MeanFieldB}, $z(\xi)$, with $|z(\xi)|<1$,
as explained above or in~\cite{laiome23,ome23}. 
We then calculate $H(z(y))$. Writing
\[
   H(z(y))=\alpha+\beta\cos{y}+\gamma\sin{y}+\mbox{higher harmonics}
\]
we have
\begin{eqnarray*}
H(z(y-vs)) &=& \alpha+\beta[\cos{y}\cos{(vs)}+\sin{y}\sin{(vs)}] \\[2mm]
&+& \gamma[\sin{y}\cos{(vs)}-\cos{y}\sin{(vs)}]+\mbox{higher harmonics}
\end{eqnarray*}
and thus
\begin{eqnarray*}
   \int_{-\pi}^\pi G(\xi-y)H(z(y-vs))dy &=&
2\pi A\alpha+B\pi \left\{ \cos{\xi}[\beta\cos{(vs)}-\gamma\sin{(vs)}]
\vphantom{\sum}\right. \\
&+& \left.\vphantom{\sum} \sin{\xi}[\beta\sin{(vs)}+\gamma\cos{(vs)}] \right\}.
\end{eqnarray*}
Then, evaluating the integral over $s$ in~\eqref{eq:Ixi} we obtain
\begin{align*}
   I(\xi) & =2\pi A\alpha  + \frac{B\pi}{va}\cos{\xi}\{\beta[\sin{(v(\tau+a))}-\sin{(v\tau)}]+\gamma[\cos{(v(\tau+a))}-\cos{(v\tau)}]\} \nonumber \\
   & + \frac{B\pi}{va}\sin{\xi}\{-\beta[\cos{(v(\tau+a))}-\cos{(v\tau)}]+\gamma[\sin{(v(\tau+a))}-\sin{(v\tau)}]\} \\
   & \equiv \tilde{\delta}+\tilde{\epsilon}\cos{\xi}+\tilde{\phi}\sin{\xi}.
\end{align*}
If these tilded coefficients match the un-tilded ones we are done.
As above we can use the translational invariance of the system to 
set $\epsilon=0$, so in practice we solve 
\[
   \delta-\tilde{\delta}=0 \qquad \mbox{ and } \qquad \phi-\tilde{\phi}=0
\]
along with the pinning condition $\tilde{\epsilon}=0$, 
giving three equations in three unknowns ($\delta,\phi$ and $v$).
Note that for some other delay kernels with compact support, e.g.,~piecewise linear, it is 
also possible to analytically evaluate
the integral over $s$ in~\eqref{eq:Ixi}.

Following traveling waves as $a$ and $\tau$ are varied we find they are destroyed in saddle-node bifurcations,
shown in red in Fig.~\ref{fig:atauU}. Note the qualitative similarity with
Fig.~\ref{fig:atau}.

\subsection{Periodic solutions with spatial structure}

We now show how to describe $T$-periodic solutions of~\eqref{Eq:MeanField}, \eqref{eq:IintA}
in a self-consistent way. 
Due to the form of $G(x)$ and the translational invariance of the system 
we can approximate $I(x,t)$ by the finite Fourier series in time
\begin{eqnarray}
I(x,t) &=& \alpha_0+\sum_{m=1}^M [\alpha_m \cos{(m \gamma t)}+\beta_m \sin{(m \gamma t)}] \label{eq:Iper} \\
&+& \left[\gamma_0+\sum_{m=1}^M [\gamma_m \cos{(m \gamma t)}+\delta_m \sin{(m \gamma t)}]\right]\cos{x} \nonumber
\end{eqnarray}
where $\gamma=2\pi/T$. Let us assume that
we know all of these coefficients for one set of parameter values, as extracted from a simulation. 
We then substitute~\eqref{eq:Iper} into~\eqref{Eq:MeanField} and find
the $T$-periodic solution $z(x,t)$ of~\eqref{Eq:MeanField}.
For this, at each value of $x$, we use the same technique
as explained in Sec.~\ref{sec:travinf}.
Next, we construct $H(z(x,t))$, which will also be even in $x$, and write it as
\begin{eqnarray*}
   H(z(x,t)) &=& a_0+\sum_{m=1}^M [a_m\cos{(m \gamma t)}+b_m \sin{(m \gamma t)}] \\
&+& \left[c_0+\sum_{m=1}^M [c_m \cos{(m \gamma t)}+d_m \sin{(m \gamma t)}]\right]\cos{x}+\mbox{higher spatial harmonics}.
\end{eqnarray*}
Then
\begin{eqnarray*}
   H(z(x,t-s)) &=& a_0+\sum_{m=1}^M \left\{\cos{(m \gamma t)}[a_m \cos{(m \gamma s)}-b_m \sin{(m \gamma s)}] \right. \\
   &+& \left. \sin{(m \gamma t)}[a_m \sin{(m \gamma s)}+b_m \cos{(m \gamma s)}]\right\} \\
   &+& \cos{x}\left(c_0+\sum_{m=1}^M \left\{\cos{(m \gamma t)}[c_m \cos{(m \gamma s)}-d_m \sin{(m \gamma s)}] \right. \right. \\
   &+& \left. \sin{(m \gamma t)}[c_m \sin{(m \gamma s)}+d_m \cos{(m \gamma s)}]\right\}\bigg)+\mbox{higher harmonics}
\end{eqnarray*}
and thus
\begin{eqnarray*}
    \int_{-\pi}^\pi G(x-y)H(z(y,t-s))dy &=& 2\pi A \left[a_0+\sum_{m=1}^M \left\{\cos{(m \gamma t)}[a_m \cos{(m \gamma s)}-b_m \sin{(m \gamma s)}] \right. \right. \\
   &+& \left. \sin{(m \gamma t)}[a_m \sin{(m \gamma s)}+b_m \cos{(m \gamma s)}]\right\}\bigg] \\
   &+& B\pi \cos{x}\left(c_0+\sum_{m=1}^M \left\{\cos{(m \gamma t)}[c_m \cos{(m \gamma s)}-d_m \sin{(m \gamma s)}] \right. \right. \\
   &+& \left. \sin{(m \gamma t)}[c_m \sin{(m \gamma s)}+d_m \cos{(m \gamma s)}]\right\}\bigg).
\end{eqnarray*}
Now we calculate
\begin{eqnarray*}
    I(x,t) &=& \frac{1}{a}\int_{-\pi}^\pi G(x - y) \int_{\tau}^{\tau+a}H\left( z\left(y,t-s\right) \right) ds\: dy = \tilde{\alpha}_0+\sum_{m=1}^M [\tilde{\alpha}_m \cos{(m \gamma t)}+\tilde{\beta}_m \sin{(m \gamma t)}] \\
   &+&
\left[\tilde{\gamma}_0+\sum_{m=1}^M [\tilde{\gamma}_m \cos{(m \gamma t)}+\tilde{\delta}_m \sin{(m \gamma t)}]\right]\cos{x}
\end{eqnarray*}
where $\tilde{\alpha}_0=2\pi Aa_0$ and $\tilde{\gamma}_0=B\pi c_0$ and
\begin{align*}
   \tilde{\alpha}_m & =\frac{2\pi A}{m \gamma a}\{a_m [\sin{(m \gamma (\tau+a))}-\sin{(m \gamma \tau)}]+b_m [\cos{(m \gamma (\tau+a))}-\cos{(m \gamma \tau)}]\}, \\[2mm]
   \tilde{\beta}_m & =\frac{2\pi A}{m \gamma a}\{a_m [-\cos{(m \gamma (\tau+a))}+\cos{(m \gamma \tau)}]+b_m [\sin{(m \gamma (\tau+a))}-\sin{(m \gamma \tau)}]\}, \\[2mm]
    \tilde{\gamma}_m & = \frac{B\pi}{m \gamma a}\{c_m [\sin{(m \gamma (\tau+a))}-\sin{(m \gamma \tau)}]+d_m [\cos{(m \gamma (\tau+a))}-\cos{(m \gamma \tau)}]\}, \\[2mm]
    \tilde{\delta}_m & = \frac{B\pi}{m \gamma a}\{c_m [-\cos{(m \gamma (\tau+a))}+\cos{(m \gamma \tau)}]+d_m [\sin{(m \gamma (\tau+a))}-\sin{(m \gamma \tau)}]\}.
\end{align*}
For self-consistency we need all of the tilded expressions just found 
to equal the non-tilded ones that we started with.
There is one more unknown ($T$) so we add a pinning condition too,
solving the $4M+3$ resulting equations using Newton's method.
Employing the above numerical scheme,
we calculated the arc-shaped branches
of periodic solutions that start and end
at the blue curves in Fig.~\ref{fig:hopfbumpU}.
The resulting diagram looks similar to Fig.~\ref{fig:vtau}, so we do not show it.
Importantly, the found periodic solutions have saddle-node bifurcations, as shown by the red curves in Fig.~\ref{fig:hopfbumpU}.

Note: other stable periodic solutions exist in this model, as seen in Fig.~\ref{fig:other}(a). 
These solutions do not arise directly from Hopf bifurcations of the stable bump. Following this
solution as $\tau$ is varied it undergoes a saddle-node bifurcation as $\tau$ is decreased, before
colliding with another (apparently unstable) periodic solution. Similar solutions were found in the model
studied in Sec.~\ref{sec:inf}.

%%%%%%%%%
\begin{figure}[tbp]
\centering
\includegraphics[width=0.9\textwidth]{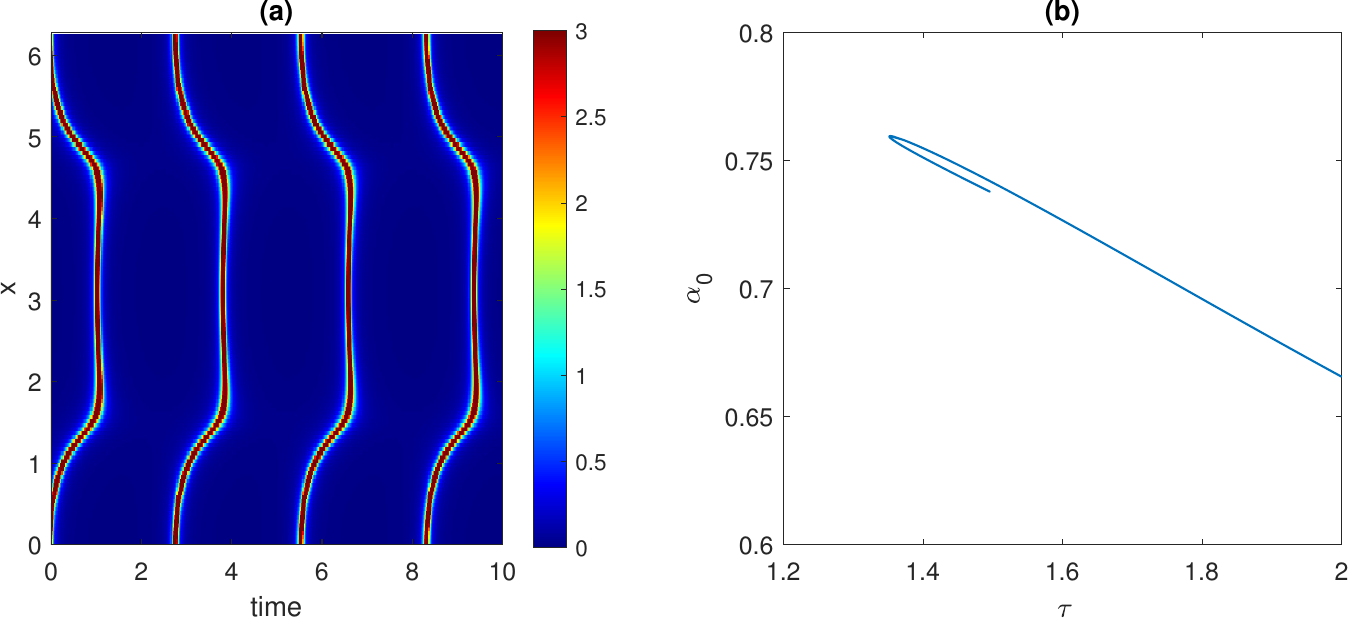}
\caption{(a): a periodic solution of~\eqref{Eq:MeanField} with~\eqref{eq:IintA} for $\tau=1.5$. 
Firing frequency is shown in color, capped at 3.
(b): continuation of the solution in panel (a) as $\tau$ is varied. $a=1$ and other
parameters as in Table~\ref{tab:param}.}
\label{fig:other}
\end{figure}
%%%%%%%%%%%%%%

\section{Conduction delays}
\label{sec:cond}

In this section we consider the neural field equation~\eqref{Eq:MeanField}
with an input current term containing
conduction delays~\eqref{eq:IintC}.
As explained above, this model describes
the long-term behavior of a large population
of theta neurons given by
Eqs.~\eqref{eq:dthetadt} and~\eqref{eq:IdiscB}. 
One of the parameters in this model is conduction speed $c$. This speed is determined
by biophysical properties such as axon diameter and the thickness of its myelin 
sheath~\cite{schkno19}. The properties of myelin may change with both age
and as the result of disorders such as multiple sclerosis~\cite{huazha25}.

\subsection{Stability of stationary states}

Consider a stationary state, $Z(x)$, of~\eqref{Eq:MeanField}, \eqref{eq:IintC}.
Inserting the ansatz $z(x,t) = Z(x) + v(x,t)$ into~\eqref{Eq:MeanField}, \eqref{eq:IintC}
and linearizing with respect to $v(x,t)$, we obtain
\begin{eqnarray}
\pf{v}{t} &=& \mu(x) v + i \kappa \zeta(x)
\int_{-\pi}^\pi G(x - y)  \label{eq:lincond} \\[2mm]
&\times& 
\left[ D'( Z(y) ) v\left(y,t -\tau- \fr{|x - y|}{c} \right)
+ \overline{D'( Z(y) )} \overline{v}\left(y,t - \tau-\fr{|x - y|}{c} \right) \right]\: dy,
\nonumber
\end{eqnarray}
where $\mu(x)$, $\zeta(x)$ and $D'(z)$ are given
by~\eqref{Def:mu}, \eqref{Def:zeta} and~\eqref{Def:D_}, respectively.
Although Eq.~\eqref{eq:lincond} is very similar to Eq.~\eqref{eq:lin},
it has important qualitative differences.
In particular, the dependence on $x - y$ in the integral term
is present not only in the coupling function $G(x)$
but also in the argument of the perturbation $v(x,t)$.
This makes the analysis of Eq.~\eqref{eq:lincond} more complicated.

\subsubsection{Constant stationary states}

Suppose that $Z(x)$ does not depend on~$x$;
then the functions $\mu(x)$ and~$\zeta(x)$ also do not depend on~$x$.
To highlight this feature, we use the simple letters $Z$, $\mu$ and $\zeta$.
The stability of the constant stationary state $Z$ can be investigated using the ansatz
$$
v(x,t) = v_+ e^{i k x} e^{\lambda t} + \overline{v}_- e^{-i k x} e^{\overline{\lambda} t}
$$
with different integers~$k$. Inserting this into Eq.~\eqref{eq:lincond}, we obtain
\begin{eqnarray*}
\lambda v_+ e^{i k x} e^{\lambda t} &+& \overline{\lambda} \overline{v}_- e^{-i k x} e^{\overline{\lambda} t} = \mu v_+ e^{i k x} e^{\lambda t}
+ \mu \overline{v}_- e^{-i k x} e^{\overline{\lambda} t} + i \kappa \zeta
\int_{-\pi}^\pi G(x - y)  \\[2mm]
&\times&
\left[ D'( Z ) \left( v_+ e^{i k y} e^{-\lambda\tau} e^{\lambda t} e^{-\lambda |x - y| / c}
+ \overline{v}_- e^{-i k y} e^{-\overline{\lambda} \tau} e^{\overline{\lambda} t} e^{-\overline{\lambda} |x - y| / c} \right) \right. \\[2mm]
&+&
\left. \overline{D'( Z} \left( \overline{v}_+ e^{-i k y} e^{-\overline{\lambda} \tau} e^{\overline{\lambda} t} e^{-\overline{\lambda} |x - y| / c}
+ v_- e^{i k y} e^{-\lambda \tau} e^{\lambda t} e^{-\lambda |x - y| / c} \right) \right] \: dy.
\end{eqnarray*}
Equating separately the terms proportional to $e^{\lambda t}$ and $e^{\overline{\lambda} t}$,
we obtain a system of two equations
\begin{eqnarray*}
\lambda v_+ e^{i k x} &=& \mu v_+ e^{i k x} + i \kappa \zeta
\int_{-\pi}^\pi G(x - y)  \\[2mm]
&\times&
\left[ D'( Z ) v_+ e^{i k y} e^{-\lambda\tau} e^{-\lambda |x - y| / c}
+ \overline{D'( Z )} v_- e^{i k y} e^{-\lambda\tau} e^{-\lambda |x - y| / c} \right] \: dy
\end{eqnarray*}
and
\begin{eqnarray*}
\lambda v_- e^{i k x} &=&
\overline{\mu} v_- e^{i k x} - i \kappa \overline{\zeta}
\int_{-\pi}^\pi G(x - y)  \\[2mm]
&\times&
\left[ \overline{D'( Z )} v_- e^{i k y} e^{-\lambda\tau} e^{-\lambda |x - y| / c}
+ D'( Z ) v_+ e^{i k y} e^{-\lambda\tau} e^{-\lambda |x - y| / c} \right] \: dy.
\end{eqnarray*}
Next, for every $2\pi$-periodic function $f(x)$ we have
$$
f(x) = \sum\limits_{k=-\infty}^\infty \hat{f}_k e^{i k x}
\quad\mbox{where}\quad
\hat{f}_k = \fr{1}{2\pi} \int_{-\pi}^\pi f(x) e^{-i k x}.
$$
Therefore,
\[
\lambda v_+ = \mu v_+ + \fr{i \kappa \zeta}{2\pi} \left[ D'( Z ) v_+ + \overline{D'( Z )} v_- \right] e^{-\lambda\tau} \int_{-\pi}^\pi \int_{-\pi}^\pi e^{- i k (x - y)} G(x - y)  e^{-\lambda |x - y| / c} \: dy\: dx
\]
and
\[
\lambda v_- = \overline{\mu} v_- - \fr{i \kappa \overline{\zeta}}{2\pi} \left[ D'( Z ) v_+ + \overline{D'( Z )} v_- \right] 
e^{-\lambda\tau} \int_{-\pi}^\pi \int_{-\pi}^\pi e^{- i k (x - y)} G(x - y)  e^{-\lambda |x - y| / c} \: dy\: dx.
\]
Thus
\be
\lambda \left(
\begin{array}{c}
v_+\\[2mm]
v_-
\end{array}
\right)
=
\left(
\begin{array}{cc}
\mu & 0\\[2mm]
0 & \overline{\mu}
\end{array}
\right)
\left(
\begin{array}{c}
v_+\\[2mm]
v_-
\end{array}
\right)
+ i \kappa Q_k(\lambda)e^{-\lambda\tau}
\left(
\begin{array}{cc}
\zeta & \zeta \\[2mm]
- \overline{\zeta} & - \overline{\zeta}
\end{array}
\right)
\left(
\begin{array}{c}
D'( Z ) v_+\\[2mm]
\overline{D'( Z )} v_-
\end{array}
\right) \label{eq:stab}
\ee
where
\[
Q_k(\lambda) = \fr{1}{2\pi} \int_{-\pi}^\pi \int_{-\pi}^\pi e^{- i k (x - y)} G(x - y)  e^{-\lambda |x - y| / c} \: dx\: dy
=  \int_{-\pi}^\pi G(x) e^{- i k x} e^{-\lambda |x| / c} dx .
\]
In particular, for the function $G(x)$ given by~\eqref{Def:G}, we have
\begin{eqnarray*}
Q_k(\lambda)
%&=& \int_{-\pi}^0 (A+B\cos{x})e^{- i k x} e^{\lambda x / c} dx
%+\int_{0}^\pi (A+B\cos{x})e^{- i k x} e^{-\lambda x / c} dx \\
&=& \frac{2A(\lambda/c)}{(\lambda/c)^2+k^2}\Big(1-(-1)^k e^{-\lambda\pi/c}\Big) \\
&+& B\Big(1+(-1)^k e^{-\lambda\pi/c}\Big)
\left[\frac{\lambda/c}{(\lambda/c)^2+(k+1)^2}+\frac{\lambda/c}{(\lambda/c)^2+(k-1)^2}\right].
\end{eqnarray*}
For each $k\in\mathbb{Z}$, for~\eqref{eq:stab} to have non-zero solutions
for $v_+$ and $v_-$ we require
\begin{equation}
|\lambda I_2-K_k(\lambda)|=0
\label{Eq:K_k}
\end{equation}
where
\[
   K_k(\lambda)=\left(
\begin{array}{cc}
\mu & 0\\[2mm]
0 & \overline{\mu}
\end{array}
\right)
+ i \kappa Q_k(\lambda)e^{-\lambda\tau}
\left(
\begin{array}{cc}
\zeta & \zeta \\[2mm]
- \overline{\zeta} & - \overline{\zeta}
\end{array}
\right)
\begin{pmatrix} D'( Z ) & 0 \\ 0 & \overline{D'( Z )}\end{pmatrix}.
\]
Clearly, eigenvalues $\lambda$ associated with the stability of the stationary state $Z$
are given by solutions of Eq.~\eqref{Eq:K_k}.
Eigenvalues with zero real part correspond to instabilities of the stationary state.
For an illustrative example, we choose the parameter values
given in Table~\ref{tab:param:2} and $\eta_0 = 0.1$.
Instabilities of the spatially uniform state that is stable in the absence of delays
(analogous to $Z_3$ in the previous section),
where eigenvalues have non-zero imaginary parts 
for $k=0,1,2,3,4,5$, are shown in Fig.~\ref{fig:conduni}.
The curves for $k=2,3,4,5$ have similar shapes,
while those for $k=0,1$ are qualitatively different.
This spatially uniform state is stable when $\tau=1/c=0$
and can lose stability to perturbations with $k=0,1,2,3,4$,
depending on how $\tau$ and $c$ are varied away from this point in parameter space,
and which curve of instabilities is crossed first.

\begin{table}[htbp]
\footnotesize
\caption{Parameter values}
\begin{center}
\begin{tabular}{|c||ccccccccc|} \hline
Parameter\vphantom{$\int_0^1$} & & $\Delta$ & & $\kappa$ & & $A$ & & $B$ & \\ \hline
Value\vphantom{$\int_0^1$} & & 0.03 & & 1.5 & & 0.1 & & 0.3 & \\ \hline
\end{tabular}
\end{center}
\label{tab:param:2}
\end{table}

\begin{figure}[htbp]
\centering
\includegraphics[width=0.8\textwidth]{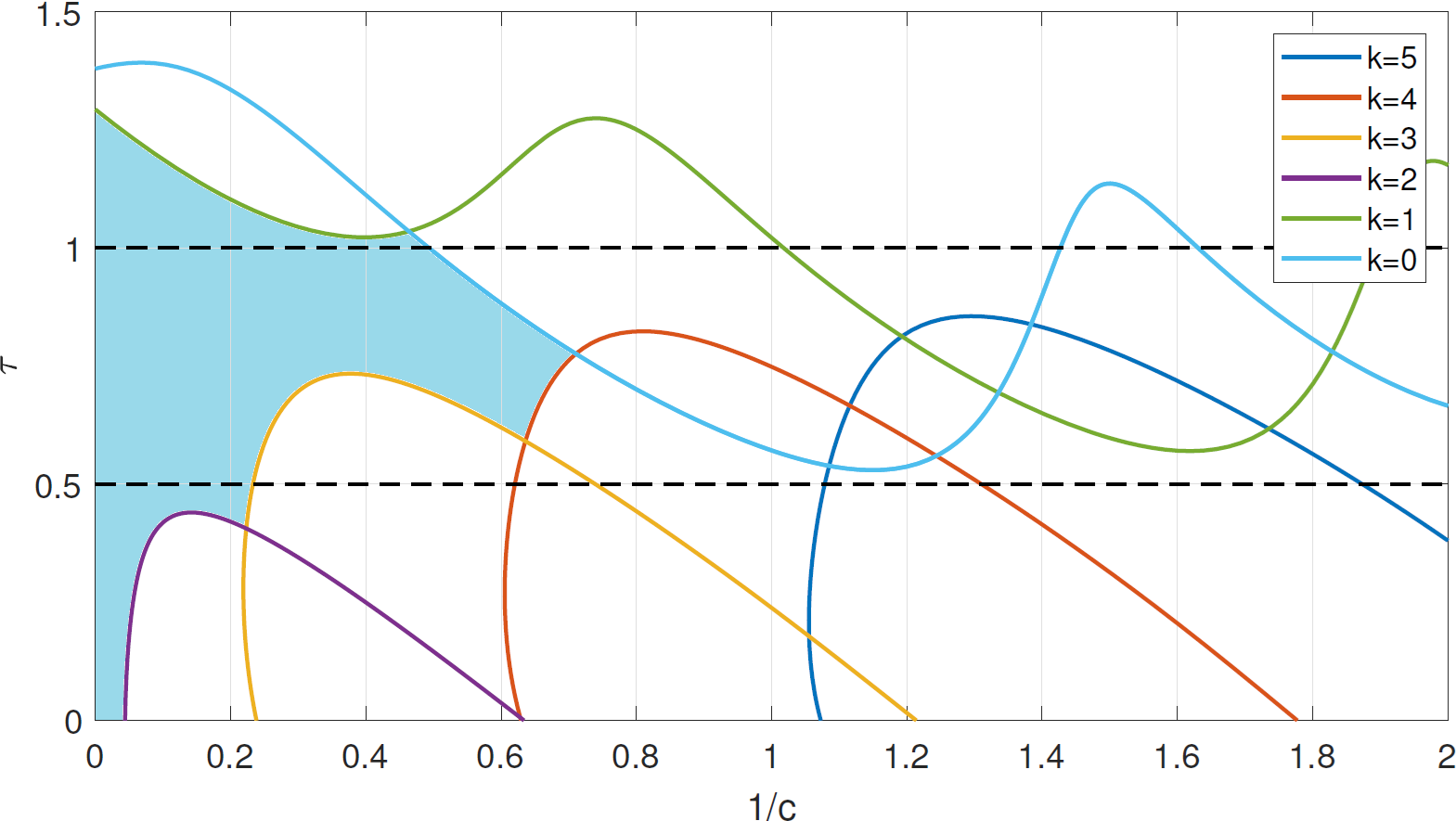}
\caption{Hopf instabilities of a spatially uniform state 
of Eq.~(\ref{Eq:MeanField}), (\ref{eq:IintC}), for perturbations with different wavenumbers, $k$. The state is stable when $\tau=1/c=0$ and in the blue shaded region.
The black dashed lines relate to later Figures.
%The state becomes unstable as $\tau$ is decreased. 
Parameters: $\eta_0=0.1$ and other parameters as in Table~\ref{tab:param:2}.}
\label{fig:conduni}
\end{figure}

\subsubsection{Nonconstant stationary states}
\label{sec:cond:bump}

For a nonconstant solution $z = Z(x)$ of~(\ref{Eq:MeanField}), (\ref{eq:IintC}),
we need to use a general perturbation ansatz
$$
v(x,t) = v_+(x) e^{\lambda t} + \overline{v}_-(x) e^{\overline{\lambda} t}.
$$
Inserting this into Eq.~(\ref{eq:lincond}), we obtain
\begin{eqnarray*}
\lambda v_+(x) e^{\lambda t} &+& \overline{\lambda} \overline{v}_-(x) e^{\overline{\lambda} t}
= \mu(x) v_+(x) e^{\lambda t} + \mu(x)  \overline{v}_-(x) e^{\overline{\lambda} t}
+ i \kappa \zeta(x)
\int_{-\pi}^\pi G(x - y) \\[2mm]
&\times&
\left[ D'( Z(y) ) v_+(y) e^{\lambda t} e^{-\lambda \tau} e^{-\lambda |x - y|/c}
+ D'( Z(y) ) \overline{v}_-(y) e^{\overline{\lambda} t}
e^{-\overline{\lambda} \tau} e^{- \overline{\lambda} |x - y|/c}
\right.\\[2mm]
&+&
\left.
\overline{D'( Z(y) )} \overline{v}_+(y) e^{\overline{\lambda} t} e^{-\overline{\lambda} \tau} e^{-\overline{\lambda} |x - y|/c}
+ \overline{D'( Z(y) )} v_-(y) e^{\lambda t}
e^{-\lambda \tau} e^{- \lambda |x - y|/c} \right] dy.
\end{eqnarray*}
Then, separately equating the terms proportional to $e^{\lambda t}$
and the terms proportional to $e^{\overline{\lambda} t}$,
we get two equations that can be written in the matrix form
\begin{eqnarray}
\lambda \left(
\begin{array}{c}
v_+\\[2mm]
v_-
\end{array}
\right)
&=&
\left(
\begin{array}{cc}
\mu(x) & 0\\[2mm]
0 & \overline{\mu}(x)
\end{array}
\right)
\left(
\begin{array}{c}
v_+\\[2mm]
v_-
\end{array}
\right)
+ i \kappa e^{-\lambda \tau}
\left(
\begin{array}{cc}
\zeta(x) & \zeta(x) \\[2mm]
- \overline{\zeta(x)} & - \overline{\zeta(x)}
\end{array}
\right) \label{EVP:Nonconstant}\\[2mm]
&\times&
\int_{-\pi}^\pi G(x - y) e^{- \lambda |x - y|/c}
\left(
\begin{array}{c}
D'( Z(y) ) v_+(y)\\[2mm]
\overline{D'( Z(y) )} v_-(y)
\end{array}
\right) dy.
\nonumber
\end{eqnarray}
As in previous sections, the 
nonlinear eigenvalue problem~(\ref{EVP:Nonconstant}) defines two types of spectra. The essential spectrum is given explicitly
$$
\sigma_\mathrm{ess} = \{ \mu(x)\::\: x\in[-\pi,\pi] \} \cup \{ \mathrm{c.c} \}.
$$
The remaining discrete spectrum $\sigma_\mathrm{discr}$ consists
of isolated eigenvalues $\lambda\notin\sigma_\mathrm{ess}$,
for which Eq.~(\ref{EVP:Nonconstant})
has nontrivial bounded solutions.
This part of the spectrum can be found
only approximately by assuming
$$
\left(
\begin{array}{c}
v_+(x)\\[2mm]
v_-(x)
\end{array}
\right) \approx \sum\limits_{k=-M}^M \left(
\begin{array}{c}
v_{k,+}\\[2mm]
v_{k,-}
\end{array}
\right) e^{i k x}
$$
with some positive integer $M$,
inserting this into the equation
\begin{eqnarray*}
\left(
\begin{array}{c}
v_+(x)\\[2mm]
v_-(x)
\end{array}
\right)
&=& i \kappa e^{-\lambda \tau}
\left(
\begin{array}{cc}
\lambda - \mu(x) & 0\\[2mm]
0 & \lambda - \overline{\mu}(x)
\end{array}
\right)^{-1}
\left(
\begin{array}{cc}
\zeta(x) & \zeta(x) \\[2mm]
- \overline{\zeta(x)} & - \overline{\zeta(x)}
\end{array}
\right)\\[2mm]
&\times&
\int_{-\pi}^\pi G(x - y) e^{- \lambda |x - y|/c}
\left(
\begin{array}{c}
D'( Z(y) ) v_+(y)\\[2mm]
\overline{D'( Z(y) )} v_-(y)
\end{array}
\right) dy,
\end{eqnarray*}
and then projecting the resulting equation
onto the subspace spanned
by the Fourier modes $e^{i k x}$ with $|k|\le M$.
As a result, one obtains a nonlinear eigenvalue
problem in the $2(2M+1)$-dimensional space. Essentially one forms a
$2(2M+1)\times 2(2M+1)$ matrix that depends on $\lambda$. Values of $\lambda$ for
which the determinant of this matrix is zero are eigenvalues associated with the
stability of the solution $Z(x)$. Using $M=10$ we obtain the results shown
in Fig.~\ref{fig:percond} for the stability of a stationary bump, 
for two different values of $\eta_0$.

\subsection{Traveling waves}

Instabilities of the constant stationary state with wavenumbers $k\ne 0$
result in the creation of traveling waves.
The corresponding solution branches can be calculated
using the self-consistency approach described below.
Supposing that Eq.~(\ref{Eq:MeanField}), (\ref{eq:IintC}) has a solution of the form $z = u(x + v t)$,
then $u$ satisfies
\begin{equation}
v \df{u}{x} = \fr{[ i ( \eta_0 + \kappa I(x) ) - \Delta ] ( 1 + u )^2}{2} - i \fr{(1 - u)^2}{2}
\label{Eq:a}
\end{equation}
where
\begin{equation}
I(x) = \int_{-\pi}^\pi G(\xi)
H\left( u\left(x-v\tau- \xi - \fr{v |\xi|}{c} \right) \right)  d\xi.
\label{Def:I:cond_delay}
\end{equation}
Note that in this case, the input current $I(x)$
cannot be represented as a linear combination of a finite set of functions,
as was the case in Sec.~\ref{sec:travinf} and Sec.~\ref{sec:travinf:2}.

\begin{proposition}
Suppose we have
\begin{equation}
   H(u(x)) = a_0 + \sum_{m=1}^\infty \left( a_m\cos{(mx)}+b_m\sin{(mx)} \right).
\label{Def:H:u}
\end{equation}
Then for the coupling function $G(x)$ given by~\eqref{Def:G} we have
\begin{eqnarray*}
  I(x) & = & 2a_0 A\pi+\sum_{m=1}^\infty a_m\left[\cos{(m(x-v\tau))}(C_m^-+C_m^+)+\sin{(m(x-v\tau))}(D_m^-+D_m^+)\right] \\
  & & +\sum_{m=1}^\infty b_m\left[\sin{(m(x-v\tau))}(C_m^-+C_m^+)-\cos{(m(x-v\tau))}(D_m^-+D_m^+)\right],
\end{eqnarray*}
where
\begin{eqnarray*}
C_m^\pm & = & \frac{Ac}{m(c\pm v)}\sin{\left(\frac{m(c\pm v)\pi}{c}\right)}
+ \frac{Bc}{2m}\left[\frac{1}{c\pm v-c/m}\sin{\left(\frac{m(c\pm v-c/m)\pi}{c}\right)} \right. \\
&+& \left. \frac{1}{c\pm v+c/m}\sin{\left(\frac{m(c\pm v+c/m)\pi}{c}\right)}\right], \\[2mm]
D_m^\pm &=& \pm \frac{Ac}{m(c\pm v)}\left[1-\cos{\left(\frac{m(c\pm v)\pi}{c}\right)}\right] 
\pm \frac{Bc}{2m}\left[\frac{1}{c\pm v-c/m}\left\{1-\cos{\left(\frac{m(c\pm v-c/m)\pi}{c}\right)}\right\} \right. \\
&+& \left. \frac{1}{c\pm v+c/m}\left\{1-\cos{\left(\frac{m(c\pm v+c/m)\pi}{c}\right)}\right\}\right].
\end{eqnarray*}
\label{Prop:HtoI}
\end{proposition}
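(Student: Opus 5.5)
The plan is to substitute the Fourier series~\eqref{Def:H:u} directly into~\eqref{Def:I:cond_delay} and integrate term by term. Write $\theta = x - v\tau$, so the argument of $u$ in~\eqref{Def:I:cond_delay} is $\theta - \xi - v|\xi|/c$. Since $u$ is a smooth $2\pi$-periodic solution of the Riccati equation~\eqref{Eq:a} with $|u|<1$, $H(u(\cdot))$ is smooth. Its Fourier series therefore converges uniformly, which justifies exchanging the sum over $m$ with the integral over $\xi$. By linearity, it then suffices to compute, for each $m\ge 1$,
\[
\int_{-\pi}^{\pi} G(\xi)\cos\bigl(m(\theta - \xi - v|\xi|/c)\bigr)\,d\xi
\quad\mbox{and}\quad
\int_{-\pi}^{\pi} G(\xi)\sin\bigl(m(\theta - \xi - v|\xi|/c)\bigr)\,d\xi .
\]
The constant term contributes $a_0\int_{-\pi}^\pi G = 2\pi A a_0$, as claimed.

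The key step is to split the $\xi$-integral into $[0,\pi]$ and $[-\pi,0]$, which removes the absolute value. On $[0,\pi]$ the phase is $\theta - p_+\xi$ with $p_+ = (c+v)/c$. On $[-\pi,0]$, after substituting $\xi\to-\xi$ and using that $G$ is even, the phase is $\theta + p_-\xi$ with $p_- = (c-v)/c$ and $\xi\in[0,\pi]$. Expanding with the addition formulas gives
\[
\cos(m(\theta\mp p_\pm\xi)) = \cos(m\theta)\cos(mp_\pm\xi) \pm \sin(m\theta)\sin(mp_\pm\xi),
\]
and the analogous identity $\sin(m(\theta\mp p_\pm\xi)) = \sin(m\theta)\cos(mp_\pm\xi)\mp\cos(m\theta)\sin(mp_\pm\xi)$. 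Hence everything reduces to the half-range integrals $\int_0^\pi G(\xi)\cos(mp\,\xi)\,d\xi$ and $\int_0^\pi G(\xi)\sin(mp\,\xi)\,d\xi$ with $p=p_\pm$. These should be exactly $C_m^\pm$ and $\pm D_m^\pm$, the sign in $D_m^\pm$ recording the orientation flip on the negative half-line. Collecting the $a_m$ and $b_m$ terms then gives the stated formula, including the $\sin(m\theta)(C^-+C^+) - \cos(m\theta)(D^-+D^+)$ structure for the $b_m$ coefficients.

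The half-range integrals are elementary. With $G=A+B\cos\xi$, the $A$-part gives $A\sin(mp\pi)/(mp)$ and $A(1-\cos(mp\pi))/(mp)$. For the $B$-part, use $\cos\xi\cos(mp\xi)=\tfrac12[\cos((mp-1)\xi)+\cos((mp+1)\xi)]$ and the analogous product formula for sines. This yields terms $\sin((mp\mp1)\pi)/(mp\mp1)$ and $(1-\cos((mp\mp1)\pi))/(mp\mp1)$. Rewriting $mp\mp1 = m(c\pm v\mp c/m)/c$ gives the prefactors $c/(m(c\pm v\mp c/m))$ and arguments $m(c\pm v\mp c/m)\pi/c$ appearing in the statement.

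The main obstacle is not conceptual but bookkeeping. One must track the signs from the reflection $\xi\to-\xi$ correctly so that the $D_m^-$ term acquires its minus sign while $C_m^-$ does not. One also needs to handle the degenerate resonant cases $c\pm v=0$ or $m(c\pm v)=\pm c$, where a denominator vanishes. There I would note that each expression $\sin(s\pi)/s$ and $(1-\cos(s\pi))/s$ extends continuously to $s=0$ (with values $\pi$ and $0$), so the formula holds in those cases when interpreted as the corresponding limit.
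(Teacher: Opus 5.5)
Your proposal is correct and follows the paper's route: split the $\xi$-integral at $0$ to remove $|\xi|$, expand $H(u(x-v\tau-(1\pm v/c)\xi))$ with the addition formulas, and evaluate the elementary half-range integrals of $(A+B\cos\xi)\cos(mp_\pm\xi)$ and $(A+B\cos\xi)\sin(mp_\pm\xi)$, which give $C_m^\pm$ and $\pm D_m^\pm$. Your extra remarks on term-by-term integration and on interpreting the resonant denominators as limits are refinements that the paper's proof does not address.
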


\begin{proof}
Simple calculations with formula~\eqref{Def:H:u} yield
\begin{eqnarray*}
   & & H\left( u\left( x - v\tau - \left( 1 \pm \frac{v}{c} \right) \xi \right) \right) \\
   & = & a_0+\sum_{m=1}^\infty a_m\left[\cos{(m(x-v\tau))}\cos{\left(m\left( 1 \pm \frac{v}{c} \right)\xi\right)}+\sin{(m(x-v\tau))}\sin{\left(m\left( 1 \pm \frac{v}{c} \right)\xi\right)}\right] \\
   &  & +\sum_{m=1}^\infty b_m\left[\sin{(m(x-v\tau))}\cos{\left(m\left( 1 \pm \frac{v}{c} \right)\xi\right)}-\cos{(m(x-v\tau))}\sin{\left(m\left( 1 \pm \frac{v}{c} \right)\xi\right)}\right].
\end{eqnarray*}
Then using the identity
\begin{eqnarray*}
\int_{-\pi}^\pi G(\xi) H\left( u\left( x - v\tau - \xi - \fr{v |\xi|}{c} \right) \right) d\xi
&=& \int_{-\pi}^0 G(\xi) H\left( u\left( x - v\tau - \left( 1 - \frac{v}{c} \right) \xi \right) \right) d\xi \\
&+& \int_0^{\pi} G(\xi) H\left( u\left( x - v\tau - \left( 1 + \frac{v}{c} \right) \xi \right) \right) d\xi
\end{eqnarray*}
and evaluating the integrals on the right-hand side with $G(x) = A + B\cos{(x)}$,
we obtain the stated Fourier expansion for $I(x)$.
\end{proof}

%A stable 
%3-twisted wave is shown in Fig.~\ref{fig:waveB}.
%
%\begin{figure}
%\begin{center}
%\includegraphics[width=12cm]{waveB}
%\caption{$H$ is shown in colour. Parameters:
%$A=0.1,B=0.3,\gamma=0.03,\eta_0=0.1,\kappa=1.5,c=1.5,\tau=0.5$}
%\label{fig:waveB}
%\end{center}
%\end{figure}

An approximate solution of \eqref{Eq:a}--\eqref{Def:I:cond_delay}
can be found using the following self-consistency approach.
Fix the number of terms in~\eqref{Def:H:u}, truncating the series at $m=M$.
Given $a_0,\dots,b_M$, use Proposition~\ref{Prop:HtoI} to construct
the approximate Fourier expansion of $I(x)$ with the same number of terms $M$.
Next, use the same technique as explained in Sec.~\ref{sec:travinf}
to find the periodic solution $u(x)$ such that $|u(x)| < 1$.
This is the profile of the traveling wave.
Construct $H(u(x))$. Its Fourier coefficients have to equal $a_0,\dots, b_M$.
(A pinning condition is added too, in order to find the wave speed $v$.)

Following waves with different wave numbers as $c$ is varied for a fixed delay $\tau=0.5$
we obtain Fig.~\ref{fig:trav}. Referring back to Fig.~\ref{fig:conduni},
we see that as $1/c$ is increased each wave is created in a supercritical bifurcation
from the spatially uniform state and then undergoes a saddle-node bifurcation,
with the branch of solutions ending in a subcritical bifurcation
from the spatially uniform state.
The circles in Fig.~\ref{fig:trav} show the results from numerical simulations
of Eq.~\eqref{Eq:MeanField}, \eqref{eq:IintC}.
As expected, only the branch of traveling waves with $k=3$ is stable,
while the other branches with $k=4$ and $k=5$ are unstable,
as they bifurcate from the already unstable constant stationary state.

\begin{figure}[htbp]
\centering
\includegraphics[width=0.7\textwidth]{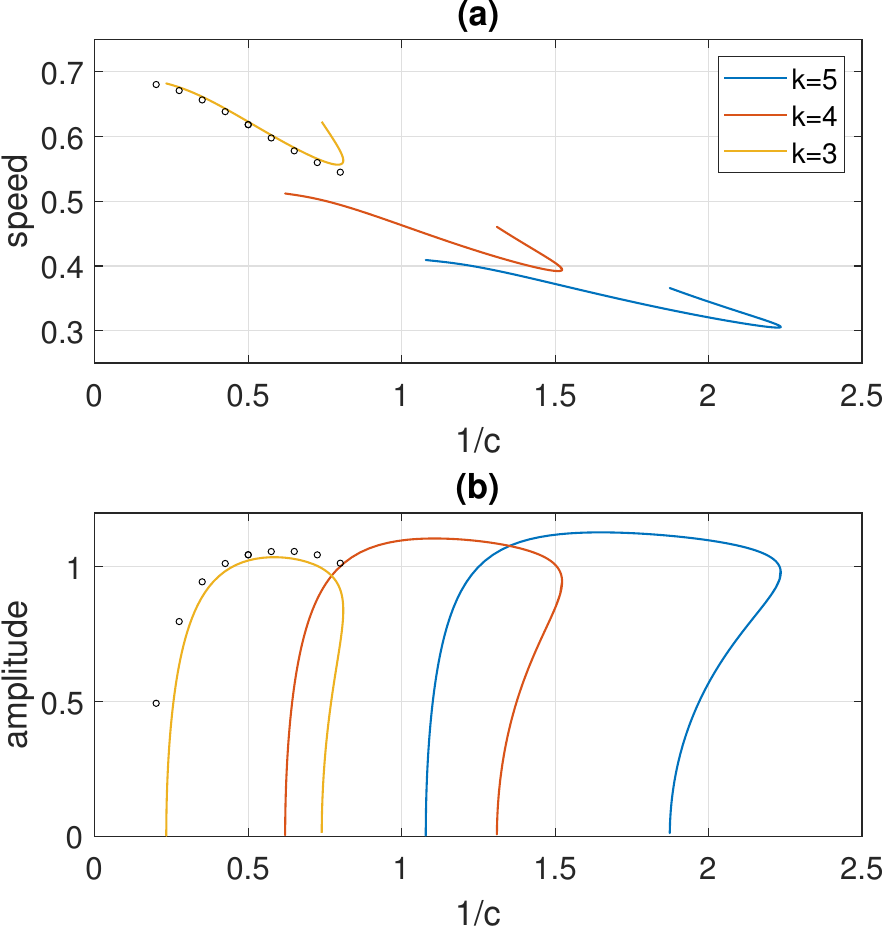}
\caption{Traveling wave solutions of Eq.~(\ref{Eq:MeanField}), (\ref{eq:IintC}).
$k$ is the angular wave number ($2\pi$ divided by the wavelength).  
(a): wave speed $v$. (b): wave amplitude, defined as $\left(\sum_{m=1}^M \left[ a_m^2+b_m^2 \right] \right)^{1/2}$. 
The symbols are from numerical simulations 
of Eq.~(\ref{Eq:MeanField}), (\ref{eq:IintC}).
The waves with $k=4,5$ are numerically unstable.
Parameters: $\tau=0.5$, $\eta_0=0.1$, $M=20$,
and other parameters as in Table~\ref{tab:param:2}.}
\label{fig:trav}
\end{figure}

\subsection{Periodic solutions with no spatial structure}

The instability of the constant stationary state
with wavenumber $k=0$ is a special case.
It corresponds to a Hopf bifurcation,
which gives rise to solutions of Eq.~\eqref{Eq:MeanField}, \eqref{eq:IintC}
that are periodic in time but have no spatial structure,
i.e.~these solutions are the same at every point in space.
The curve of such instabilities is shown light-blue in Fig.~\ref{fig:conduni}.

%\begin{figure}
%\begin{center}
%\includegraphics[width=12cm]{k=0}
%\caption{Curve of Hopf instability of spatially uniform state 
%of Eq.~(\ref{Eq:MeanField}), (\ref{eq:IintC}), for perturbations with wavenumber $k=0$.  Parameters as in Fig.~\ref{fig:conduni}.}
%\label{fig:k=0}
%\end{center}
%\end{figure}

In this section, we will show how periodic solutions with no spatial structure
can be described using the self-consistency argument.
Suppose that $Z(t)$ satisfies Eq.~\eqref{Eq:MeanField}, \eqref{eq:IintC}
and $Z(t+T) = Z(t)$ for some $T > 0$. Then we have
$$
W(t) \equiv H(Z(t)) = w_{0} + \sum\limits_{m=1}^\infty \left[ w_{m}^\mathrm{c} \cos(m \gamma t) + w_{m}^\mathrm{s} \sin(m \gamma t) \vphantom{\sum} \right],
$$
where $\gamma=2\pi/T$, and therefore
\begin{eqnarray*}
W\left(t - \tau-\fr{|\xi|}{c} \right) &=&  w_{0} \vphantom{\sum\limits_{m=1}^\infty}
+ \sum\limits_{m=1}^\infty \left[ w_{m}^\mathrm{c} \cos( m \gamma(t-\tau)) \cos\left( \fr{m \gamma |\xi|}{c} \right) 
+ w_{m}^\mathrm{c} \sin( m \gamma (t-\tau)) \sin\left( \fr{m \gamma |\xi|}{c} \right) \right. \\[2mm]
&+&
\left. w_{m}^\mathrm{s} \sin( m \gamma(t-\tau)) \cos\left( \fr{m \gamma |\xi|}{c} \right) 
- w_{m}^\mathrm{s} \cos( m \gamma(t-\tau)) \sin\left( \fr{m \gamma |\xi|}{c} \right) \right] .
\end{eqnarray*}
Using~\eqref{eq:IintC} and~\eqref{Def:G}, we write
\begin{eqnarray*}
I(t) &=&  w_{0} I_0
+ \sum\limits_{m=1}^\infty \left[ w_{m}^\mathrm{c} \cos( m \gamma(t-\tau)) I_{m}^\mathrm{c} 
+ w_{m}^\mathrm{c} \sin( m \gamma(t-\tau)) I_{m}^\mathrm{s}
\vphantom{\sum\limits_{m=1}^\infty}  \right.  \nonumber\\[2mm]
&+&
\left. \vphantom{\sum\limits_{m=1}^\infty}
w_{m}^\mathrm{s} \sin( m \gamma(t-\tau)) I_{m}^\mathrm{c}
- w_{m}^\mathrm{s} \cos( m \gamma (t-\tau)) I_{m}^\mathrm{s} \right],
\end{eqnarray*}
where
\begin{eqnarray*}
I_0 &=& \int_{-\pi}^{\pi} G(\xi) d\xi = 2\pi A, \\[2mm]
I_{m}^\mathrm{c} &=& \int_{-\pi}^\pi G(\xi)
\cos\left(\fr{m \gamma |\xi|}{c}\right) d\xi = 2\int_{0}^\pi G(\xi)
\cos\left(\fr{m \gamma \xi}{c}\right) d\xi
= \fr{2 A c}{m \gamma} \sin\left( \fr{m \gamma}{c} \pi \right) \\
&+& B \left[ \fr{c}{m \gamma - c} \sin\left( \fr{m \gamma - c}{c} \pi \right) + \fr{c}{m \gamma + c} \sin\left( \fr{m \gamma+ c}{c} \pi \right) \right], \\[2mm]
I_{m}^\mathrm{s} &=& \int_{-\pi}^\pi G(\xi) 
\sin\left(\fr{m \gamma |\xi|}{c}\right) d\xi = 2\int_{0}^\pi G(\xi) 
\sin\left(\fr{m \gamma \xi}{c}\right) d\xi
= \fr{2 A c}{m \gamma} \left( 1 - \cos\left( \fr{m \gamma}{c} \pi \right) \right) \\
&+& B \left[ \fr{c}{m \gamma - c}
\left( 1 - \cos\left( \fr{m \gamma - c}{c} \pi \right) \right)+ \fr{c}{m \gamma + c} \left( 1 - \cos\left( \fr{m \gamma + c}{c} \pi \right) \right) \right].
\end{eqnarray*}
Given~$I(t)$, we can find the $T$-periodic solution~$\tilde{Z}(t)$
of Eq.~\eqref{Eq:MeanField} using the technique explained in Sec.~\ref{sec:travinf}.
Then we form $\tilde{W}(t)=H(\tilde{Z}(t))$, which has expansion
$$
\tilde{W}(t) = \tilde{w}_{0} + \sum\limits_{m=1}^\infty \left[ \tilde{w}_{m}^\mathrm{c} \cos(m \gamma t) + \tilde{w}_{m}^\mathrm{s} \sin(m \gamma t) \vphantom{\sum} \right].
$$
Finally, we need all of the tilded coefficients to equal their un-tilded versions
(and a pinning condition is added, in order to find the period $T$).
Similar to the previous sections, we work with truncated Fourier expansions
containing only coefficients with indices $m=0,1,\dots,M$.
This allows us to find an approximate periodic solution.

Following such a solution for $\tau = 1$, $\eta_0=0.1$
and other parameters as in Table~\ref{tab:param:2},
we obtain Fig.~\ref{fig:spatuni}.
Referring back to Fig.~\ref{fig:conduni},
we see that the branch of solutions is created in Hopf bifurcations
of the spatially uniform state corresponding to perturbations with wave number $k=0$.
The left endpoint of the branch in Fig.~\ref{fig:spatuni}
is a subcritical bifurcation, so the periodic solutions with small amplitudes are unstable.
Moreover, results of numerical simulations for Eq.~\eqref{Eq:MeanField}, \eqref{eq:IintC}
(circles) indicate that these solutions are stable
from the fold bifurcation up to a value of $1/c$ between $1.3$ and $1.35$,
which is located at a certain distance
from the right endpoint of the branch in Fig.~\ref{fig:spatuni}.

\begin{figure}[htbp]
\centering
\includegraphics[width=0.8\textwidth]{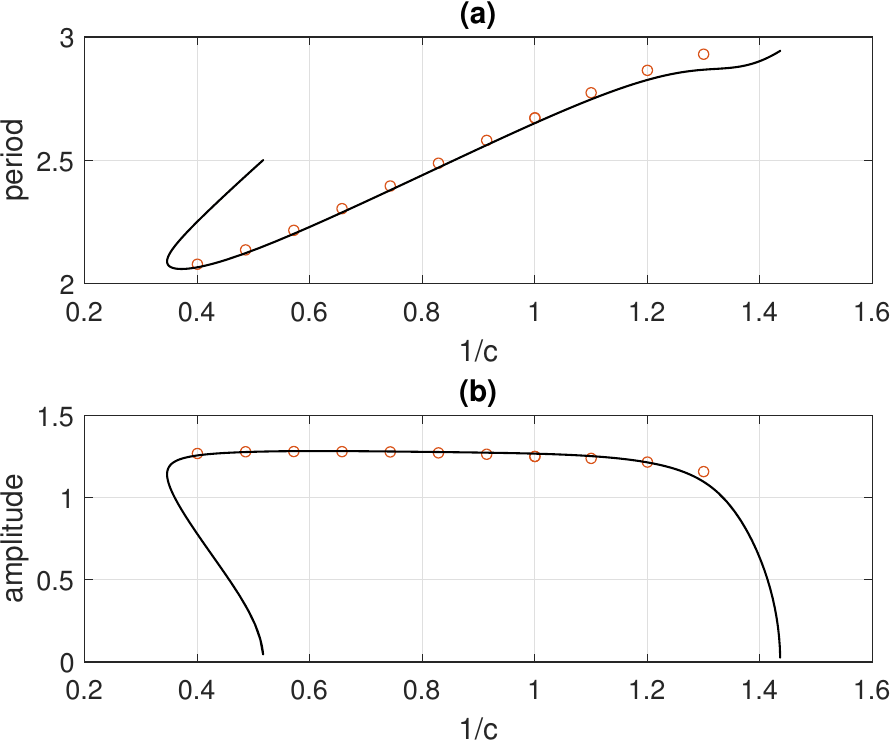}
\caption{Spatially-uniform periodic solutions
of~\eqref{Eq:MeanField}, \eqref{eq:IintC}.   
(a): Period $T$. (b): wave amplitude, defined as $\left(\sum_{m=1}^M \left[ a_m^2+b_m^2 \right] \right)^{1/2}$. Symbols are from numerical simulations 
of~\eqref{Eq:MeanField}, \eqref{eq:IintC}. This type of solution loses stability
to perturbations with spatial structure 
at a value of $1/c$ between $1.3$ and $1.35$, so is not stable upon its
creation at the rightmost Hopf bifurcation.
Parameters: $\tau = 1$, $\eta_0=0.1$, $M=5$, and other parameters as in Table~\ref{tab:param:2}.}
\label{fig:spatuni}
\end{figure}

\subsection{Periodic solutions with spatial structure}

Taking the values of the parameters
for which the neural field equation~\eqref{Eq:MeanField}
with the non-delayed input current~\eqref{eq:Iint0}
has a stable bump state, we performed a stability analysis of this state
for a fixed conduction speed $c = 1$ and a variable constant delay~$\tau$,
as explained in Sec.~\ref{sec:cond:bump}.
This allowed us to find several Hopf bifurcation points,
see Fig.~\ref{fig:percond}. By analogy with Sec.~\ref{sec:persln:inf},
we expect that these bifurcations give rise to periodic solutions
of Eq.~(\ref{Eq:MeanField}), (\ref{eq:IintC}).
Next, we describe the self-consistency argument
for calculating such periodic solutions.

Suppose that Eq.~(\ref{Eq:MeanField}), (\ref{eq:IintC}) has a solution $z = u(x,t)$
such that $u(x,t+T) = u(x,t)$ with some $T > 0$. 
%By the time rescaling $u(x,t) = z(x,t/\gamma)$
%where $\gamma = 2\pi / T$, we rewrite Eq.~(\ref{Eq:MeanField}) in the form
%\begin{equation}
%\pf{u}{t} = \fr{[ i ( \eta_0 + \kappa I(x,t) ) - \Delta ] ( 1 + u )^2}{2} - i \fr{(1 - u)^2}{2}
%\label{Eq:MeanField_}
%\end{equation}
%where
%$$
%I(x,t) = \int_{-\pi}^\pi G(x - y) 
%H\left( u\left(y,t - \tau-\fr{|x - y|}{c} \right) \right)  dy.
%$$
%So now instead of the $T$-periodic solution of Eq.~(\ref{Eq:MeanField})
%we will look for a $2\pi$-periodic solution of Eq.~(\ref{Eq:MeanField_}).
%
If $u(x,t)$ is even with respect to~$x$,
then the same is true for $W(x,t) = H( u(x,t) )$ and for
\[
I(x,t) = \int_{-\pi}^\pi G(x - y)
W\left(y,t - \tau-\fr{|x - y|}{c} \right) dy
= \int_{-\pi}^\pi G(\xi)
W\left(x+\xi,t - \tau-\fr{|\xi|}{c} \right) d\xi.
\]
(Recall that we consider an even coupling function $G(x)$.)
If we approximate $W(x,t)$ by its truncated Fourier series
$$
W(x,t) = \sum\limits_{n=0}^Q \left\{ w_{0n} + \sum\limits_{m=1}^M \left[ w_{mn}^\mathrm{c} \cos(m \gamma t) + w_{mn}^\mathrm{s} \sin(m \gamma t) \vphantom{\sum} \right] \right\} \cos(n x),
$$
where $\gamma=2\pi/T$, then
\begin{eqnarray*}
&&
W\left(x+\xi,t - \tau-\fr{|\xi|}{c} \right) = \sum\limits_{n=0}^Q \left\{ w_{0n} \vphantom{\sum\limits_{m=1}^M} \right. \\[2mm]
&&
+ \sum\limits_{m=1}^M \left[ w_{mn}^\mathrm{c} \cos( m \gamma(t-\tau)) \cos\left( \fr{m \gamma |\xi|}{c} \right) 
+ w_{mn}^\mathrm{c} \sin( m \gamma (t-\tau)) \sin\left( \fr{m \gamma |\xi|}{c} \right) \right. \\[2mm]
&&
\left.\left.+ w_{mn}^\mathrm{s} \sin( m \gamma(t-\tau)) \cos\left( \fr{m \gamma |\xi|}{c} \right) 
- w_{mn}^\mathrm{s} \cos( m \gamma(t-\tau)) \sin\left( \fr{m \gamma |\xi|}{c} \right) \right] \right\} \cos(n (x+\xi)).
\end{eqnarray*}
Furthermore,
\begin{eqnarray*}
I(x,t) &=& \sum\limits_{n=0}^Q \left\{ w_{0n} I_n(x)
+ \sum\limits_{m=1}^M \left[ w_{mn}^\mathrm{c} \cos( m \gamma(t-\tau)) I_{mn}^\mathrm{c}(x) 
+ w_{mn}^\mathrm{c} \sin( m \gamma(t-\tau)) I_{mn}^\mathrm{s}(x)
\vphantom{\sum\limits_{m=1}^M}  \right. \right. \nonumber \\[2mm]
&+&
\left.\left. \vphantom{\sum\limits_{m=1}^M}
w_{mn}^\mathrm{s} \sin( m \gamma(t-\tau)) I_{mn}^\mathrm{c}(x)
- w_{mn}^\mathrm{s} \cos( m \gamma (t-\tau)) I_{mn}^\mathrm{s}(x) \right] \right\},
\end{eqnarray*}
where
\begin{eqnarray*}
I_n(x) &=& \int_{-\pi}^{\pi} G(\xi)\cos{(n(x+\xi))} d\xi
= 2\cos{(nx)}\int_{0}^\pi G(\xi) \cos{(n\xi)}  d\xi,
%\nonumber
\\[2mm]
I_{mn}^\mathrm{c}(x) &=& \int_{-\pi}^\pi G(\xi)
\cos\left(\fr{m \gamma |\xi|}{c}\right) \cos(n (x+\xi))  d\xi
= 2\cos{(nx)}\int_{0}^\pi G(\xi)
\cos\left(\fr{m \gamma \xi}{c}\right) \cos{(n\xi)}  d\xi,
%\label{eq:Jcmn}
\\[2mm]
I_{mn}^\mathrm{s}(x) &=& \int_{-\pi}^\pi G(\xi) 
\sin\left(\fr{m \gamma |\xi|}{c}\right) \cos(n (x+\xi))  d\xi
= 2\cos{(nx)}\int_{0}^\pi G(\xi) 
\sin\left(\fr{m \gamma \xi}{c}\right) \cos{(n\xi)}  d\xi.
%\label{eq:Jsmn}
\end{eqnarray*}
Substituting~\eqref{Def:G} into the above formulas
and performing the integrals, we obtain
$$
I_n(x) =
\begin{cases}
2\pi A, & n=0, \\
B\pi\cos{x}, & n=1, \\
0, & n>1.
\end{cases}
$$
Moreover, for $I_{mn}^\mathrm{c}(x)$ we obtain an explicit formula
\begin{eqnarray*}
I_{mn}^\mathrm{c}(x) &=&
\cos{(nx)} \left\{ A \left[ \fr{c}{m \gamma - n c} \sin\left( \fr{m \gamma - n c}{c} \pi \right) + \fr{c}{m \gamma + n c} \sin\left( \fr{m \gamma + n c}{c} \pi \right)\right] \right. \\[2mm]
&+&
\frac{B}{2} \left[ \fr{c}{m \gamma - (n+1) c} \sin\left( \fr{m \gamma - (n+1) c}{c} \pi \right) + \fr{c}{m \gamma - (n-1) c} \sin\left( \fr{m \gamma - (n-1) c}{c} \pi \right) \right. \\[2mm]
&+&
\left. \left. \fr{c}{m \gamma + (n-1) c} \sin\left( \fr{m \gamma + (n-1) c}{c} \pi \right) + \fr{c}{m \gamma + (n+1) c} \sin\left( \fr{m \gamma + (n+1) c}{c} \pi \right) \right]\right\},
\end{eqnarray*}
and for $I_{mn}^\mathrm{s}(x)$ we obtain an explicit formula \small
\begin{eqnarray*}
I_{mn}^\mathrm{s}(x) &=&
\cos{(nx)} \left\{A \left[  \fr{c}{m \gamma - n c} \left( 1 - \cos\left( \fr{m \gamma - n c}{c} \pi \right) \right)  + \fr{c}{m \gamma + n c} \left( 1 - \cos\left( \fr{m \gamma + n c}{c} \pi \right) \right) \right] \right. \\[2mm]
&+&
\frac{B}{2} \left[ \fr{c}{m \gamma - (n+1) c}
\left( 1 - \cos\left( \fr{m \gamma - (n+1) c}{c} \pi \right) \right)
+ \fr{c}{m \gamma - (n-1) c} \left( 1 - \cos\left( \fr{m \gamma - (n-1) c}{c} \pi \right) \right) \right. \\[2mm]
&+&
\left. \left.
\fr{c}{m \gamma + (n-1) c} \left( 1 - \cos\left( \fr{m \gamma + (n-1) c}{c} \pi \right) \right)
+ \fr{c}{m \gamma + (n+1) c} \left( 1 - \cos\left( \fr{m \gamma + (n+1) c}{c} \pi \right) \right) \right] \right\}.
\end{eqnarray*}
\normalsize

For self-consistency calculations, the variables we use are the coefficients of the expansion of $W(x,t)$.
Given the set of the coefficients $w_{0n}$, $w_{mn}^\mathrm{c}$, $w_{mn}^\mathrm{s}$ with $n=0,1,\dots,Q$ and $m=0,1,\dots,M$, we construct $I(x,t)$.
Then, we calculate the $T$-periodic solution of Eq.~\eqref{Eq:MeanField}, $\tilde{u}(x,t)$,
using the technique explained in Sec.~\ref{sec:travinf}.
This solution is used to obtain a new function $\tilde{W}(x,t) = H( \tilde{u}(x,t) )$
and its Fourier coefficients $\tilde{w}_{0n}$,
$\tilde{w}_{mn}^\mathrm{c}$, $\tilde{w}_{mn}^\mathrm{s}$.
Finally, we equate all tilded coefficients to their un-tilded versions (and add a pinning
condition to find $T$).
%Interestingly, $\hat{w}_{0n}$ for $n>1$ are irrelevant, so we don't include them
%in the list of unknowns.

Following periodic solutions as $\tau$ is varied, we obtained several solution branches
for different values of $\eta_0$, see Fig.~\ref{fig:percond}.
The branches of solutions emanate from the stationary bump at the point it undergoes
Hopf bifurcations.
Numerical solutions of~\eqref{Eq:MeanField}, \eqref{eq:IintC} (circles) show that
the solution branches in Fig.~\ref{fig:percond} are stable between saddle-node bifurcations.

\begin{figure}[htbp]
\centering
\includegraphics[width=0.7\textwidth]{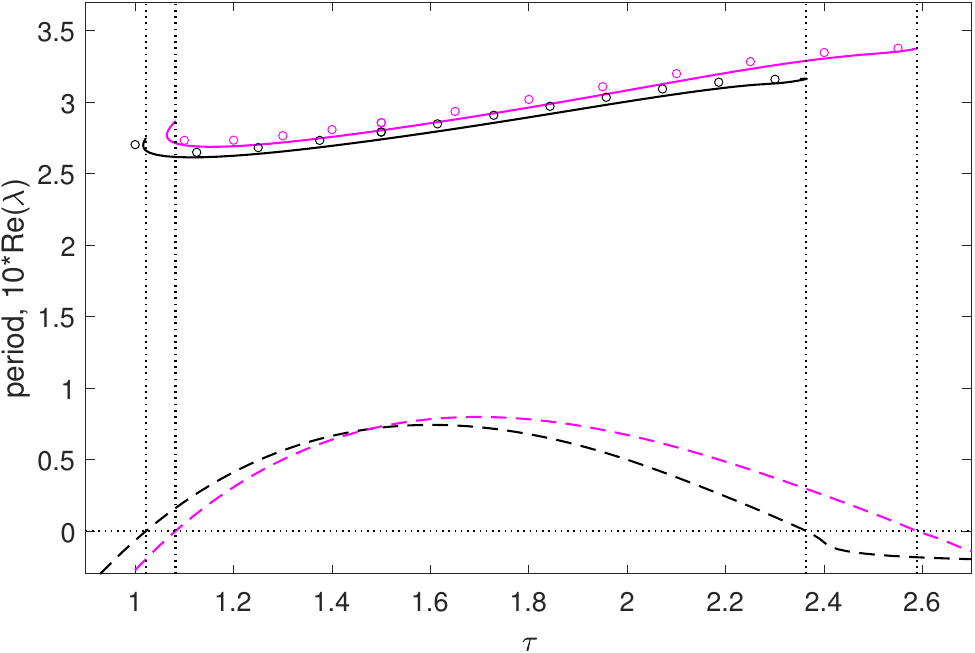}
\caption{Solid: period, $T$, of periodic solutions of~\eqref{Eq:MeanField}, \eqref{eq:IintC} 
with spatial structure, for 
$\eta_0=-0.15$ (black) and $\eta_0=-0.2$ (magenta).  Dashed: real part of the rightmost eigenvalues 
associated with the stability of a stationary bump solution, multiplied by 10
for the purpose of visualization. The dotted lines indicate where the eigenvalues cross
the imaginary axis. Symbols show the result of
numerically simulating~\eqref{Eq:MeanField}, \eqref{eq:IintC}.
Parameters: $c=1$, $M=5$, $Q = 20$, and other parameters as in Table~\ref{tab:param:2}.
}
\label{fig:percond}
\end{figure}

\section{Discussion}

We studied next generation neural field models derived from infinite
networks of heterogeneous synaptically coupled theta neurons on a ring. The interactions 
between neurons are delayed,
and we considered two forms of delays: distributed in time (two different examples), 
and conduction delays proportional
to the distance between neurons. Very little work has been done on such
models~\cite{byrode20,laiome23}. In each model we found the stability of spatially uniform
and bump states in terms of parameters describing the delays. When varying
delay parameters these states
undergo Hopf bifurcations leading to the creation of traveling waves and ``breathing'' bumps,
respectively. We showed how such time-dependent states can be described in a self-consistent 
way, and how to efficiently solve the corresponding self-consistency equations. 
We did not find Turing bifurcations leading to stationary patterns, for reasons explained
in Appendix~\ref{sec:noturing}.
Our results rely on choosing a small number of harmonics for the coupling function
$G(x)$, and the use of an efficient algorithm to find periodic solutions
of a periodically forced Riccati equation.

We now briefly discuss the relationships between our results and previous work.
Many authors have studied the effects of delays in classical neural field models,
often linearizing them around a spatially uniform 
state and determining the types of patterns that arise when it becomes 
unstable~\cite{atahut05,atahut06,vencoo07,vel13,roxmon11}. 
Typically, the patterns seen include stationary
Turing patterns, spatially uniform periodic states, and standing and traveling waves.
We also found spatially uniform periodic states and traveling waves that arise
from bifurcations of a spatially uniform state.
In addition, we found periodic solutions (breathing bumps) that arise from Hopf bifurcations
of stationary bumps, not often studied in systems with delays (but see~\cite{roxbru05}).

The recent paper~\cite{coomei25} considered a model similar to the one in
Sec.~\ref{sec:cond} but with discrete nodes and Wilson–Cowan dynamics at each node.
They constructed traveling waves using harmonic balance (writing periodic solutions as finite
Fourier series and imposing that they satisfy the governing differential equations at
a finite number of points in time) and obtained results similar to those in
Fig.~\ref{fig:trav}. One advantage of their approach is that they can determine the
stability of traveling waves.
We now discuss a number of options for future work.

%\cite{roxbru05} considered a fixed delay with a piecewise linear firing rate function.
%The stability of stationary bump, oscillating uniform and traveling wave states could be found analytically. Generalized in~\cite{roxmon11} to arbitrary firing rate function and connectivity.

%\cite{atahut06} considered a distribution of both delays and of conduction speeds and looked
%at stability of spatially uniform state. Found 
%spatially uniform periodic states, traveling waves and Turing patterns. Also studied traveling fronts.

%\cite{vencoo07} considered conduction delays. Looked at stability of spatially uniform state.
%Found standing and traveling waves. Also spatially uniform periodic states and Turing patterns.

%\cite{fayfau10} consider a general model with general delay and discuss existence, uniqueness, 
%and asymptotic stability of a solutions. They also develop a numerical scheme for the case
%of transmission delays and show the results of simulations.

%\cite{vel13} considered a model with delays like in Sec.~\ref{sec:cond} and linearized around
%the spatially uniform steady state. Found various bifurcations to traveling waves, standing waves.
%Found more exotic solutions.
%

While we have considered networks of theta neurons, they are equivalent to quadratic
integrate-and-fire (QIF) neurons, so our methods could be used to study networks of QIF neurons,
such as~\cite{omelai24}, but with delays~\cite{ratpyr18}. 
Winfree oscillators~\cite{aristr01} are another type of
model neuron and the Ott/Antonsen ansatz can be used to study infinite networks of 
them~\cite{pazmon13}. However, there seems to be little, if any, research on delayed
Winfree oscillators.

We considered several types of delays: distributed and conduction.
Other types of delays that could be considered include activity-dependent
delays~\cite{noopar20,parlef20}, time-dependent delays, and state-dependent delays~\cite{harkri06}.
One point of interest is that we assumed that the conduction velocity has a single value; 
instead, there may be a
distribution of velocities~\cite{bojlil10,atahut06}. This translates to a distribution of
conduction delays, even for a fixed distance between neurons. It may be possible to extend
our results to include a distribution of conduction velocities. While the dynamic solutions
we studied arose from the presence of delays, periodic solutions amenable to our
approach will also occur in periodically forced systems~\cite{reyhug22}.

We studied a one-dimensional domain with periodic boundary conditions, but a more
realistic domain may be the surface of a sphere~\cite{visnic17,daicec20,spevan24}.
The methods presented in this paper can be applied in this case as well.
See for example~\cite{batcle23}, where they were used
to study undelayed systems of phase oscillators in a two-dimensional domain.

\appendix

\section{Proof of no Turing bifurcations}
\label{sec:noturing}
In this section we prove that there are no Turing bifurcations of spatially uniform states
creating stationary patterns. We do this by showing that there are no zero eigenvalues
associated with the linearization about these uniform states. 

\subsection{Distributed delays}
For the models in Secs.~\ref{sec:inf} and~\ref{sec:compact} the relevant eigenvalues associated with 
the spatially uniform state $Z_3$ are given by~\eqref{eq:eigB}. For zero to be an eigenvalue we need
\be
   |I_2-\pi BL(0)|=0, \label{eq:stabB}
\ee
where
\[
   L(0)=i\kappa\bp D'(Z_3) & 0 \\ 0 & \overline{D'(Z_3)} \ep \bp -\mu & 0 \\ 0 & -\overline{\mu} \ep^{-1}
\bp \zeta & \zeta \\ - \overline{\zeta} & -\overline{\zeta} \ep
\]
and where $\mu,\zeta$ and $D'(Z)$ are given by~\eqref{Def:mu},~\eqref{Def:zeta} and~\eqref{Def:D_}, respectively.
Now~\eqref{eq:stabB} is equivalent to $1+2\pi\kappa B\Real(i \zeta D'( Z )/\mu)=0$. This is not true
for the parameters in Table~\ref{tab:param}, so~\eqref{eq:stabB} does not hold.

\subsection{Conduction delays}
For the model in Sec.~\ref{sec:cond} the eigenvalues associated with 
a spatially uniform state are given by~\eqref{Eq:K_k}. For zero to be an eigenvalue
we need
\be
   |K_k(0)|=0. \label{eq:stabK}
\ee
For $k>1$, $Q_k(0)=0$, so~\eqref{eq:stabK} is equivalent to $|\mu|^2=0$, where $\mu$
is given by~\eqref{Def:mu}. Generically, $\mu\neq 0$, and for the parameters used
in Fig.~\ref{fig:conduni}, $\mu\neq 0$. Thus~\eqref{eq:stabK} cannot hold for $k>1$.

When $k=1$, $Q_1(0)=\pi B$ and~\eqref{eq:stabK} is equivalent to
\[
   \left|\bp \mu+i \pi \kappa B \zeta D'( Z ) & i \pi \kappa B \zeta \overline{D'( Z )} \\ -i \pi \kappa B \overline{\zeta} D'( Z ) & \overline{\mu}-i \pi \kappa B \overline{\zeta} \overline{D'( Z )} \ep\right|=0,
\]
where $\zeta$ and $D'(Z)$ are given by~\eqref{Def:zeta} and~\eqref{Def:D_}, respectively, 
or $|\mu|^2+2\pi\kappa B\Real(i\bar{\mu} \zeta D'( Z ))=0$. Again, this is not true for
the parameters used in Fig.~\ref{fig:conduni}, so~\eqref{eq:stabK} does not hold for $k=1$
either. (The case $k=0$ would not lead to a spatial pattern.)

\section{Solutions in discrete network}
\label{sec:disc}
In this section we show examples of the solutions found in the continuum limit, but in
discrete networks.

\subsection{Distributed delays --- infinite support}
For the model in Sec.~\ref{sec:inf}, performing the same manipulations as in that section, we have
\be
   \frac{d\theta_j}{dt}=1-\cos{\theta_j}+(1+\cos{\theta_j})(\eta_j+\kappa I_j) \label{eq:dthetadtA}
\ee
where
\be
   \frac{dI_j}{dt}=(Y_j-I_j)/a
\ee
and
\be
   \frac{dY_j}{dt}=\frac{1}{a}\left[\frac{2\pi}{N}\sum_{k=1}^N G_{jk}P(\theta_k(s-\tau))-Y_j\right]. \label{eq:dYdtB}
\ee
A traveling wave solution is shown in Fig.~\ref{fig:trinf}.
An approximately periodic solution is shown in Fig.~\ref{fig:perinf}.
Spatially uniform and stationary bump states were also found (not shown).

\begin{figure}[htbp]
\centering
\includegraphics[width=0.9\textwidth]{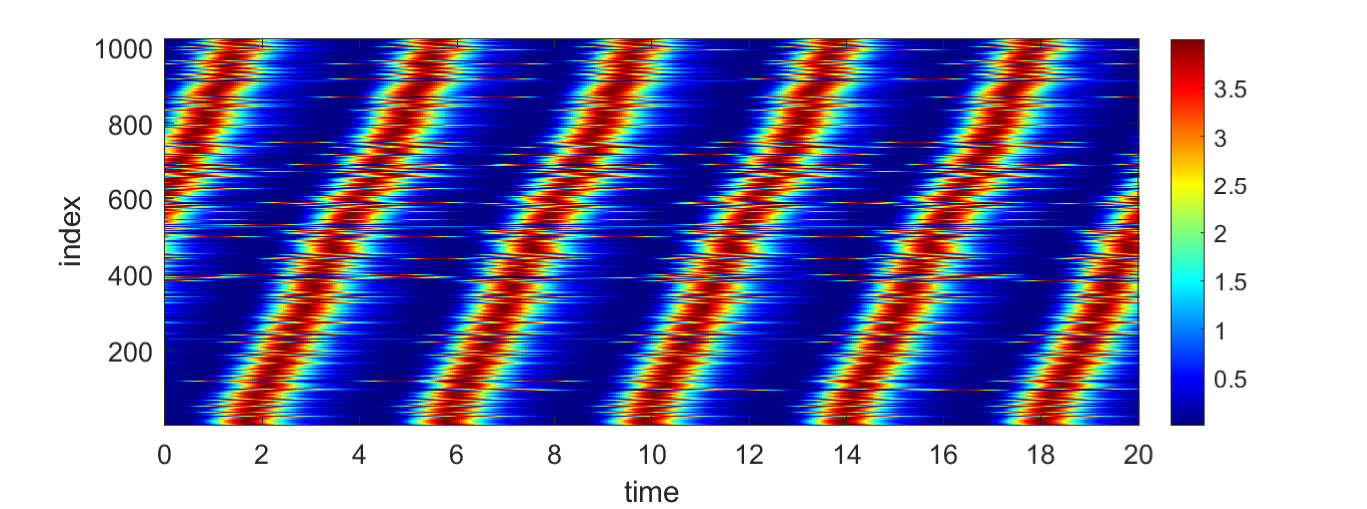}
\caption{A traveling wave solution of~\eqref{eq:dthetadtA}--\eqref{eq:dYdtB}.
Color shows $(1-\cos{\theta_j})^2$.
Parameters as in Table~\ref{tab:param}, with
$N=1024,\tau=2,a=1$.}
\label{fig:trinf}
\end{figure}

\begin{figure}[htbp]
\centering
\includegraphics[width=0.9\textwidth]{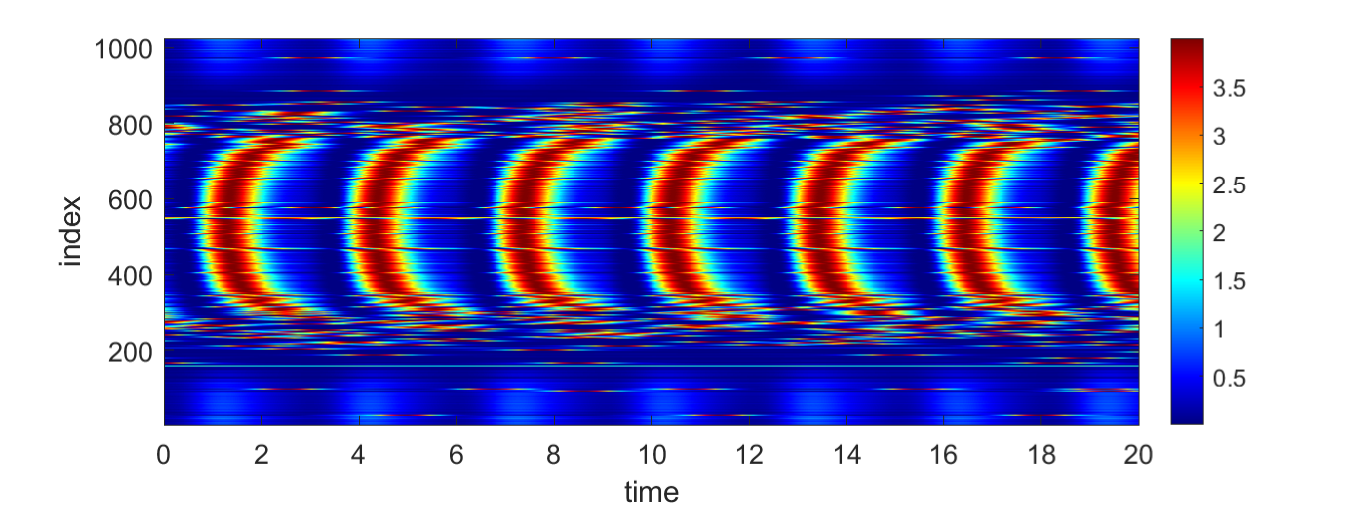}
\caption{An approximately 
periodic solution of~\eqref{eq:dthetadtA}--\eqref{eq:dYdtB} arising from a Hopf
bifurcation of a stationary bump. Color shows $(1-\cos{\theta_j})^2$.
Parameters as in Table~\ref{tab:param}, with
$N=1024,\tau=2,a=0.2$.}
\label{fig:perinf}
\end{figure}

\subsection{Distributed delays --- compact support}
For the model in Sec.~\ref{sec:compact} we have~\eqref{eq:dthetadtA} where
\be
   I_j(t)=\frac{2\pi}{aN}\sum_{k=1}^N G_{jk}\int_{\tau}^{\tau+a} P(\theta_k(t-s))\: ds. \label{eq:IrectA}
\ee
We approximate the integral using the $n$ point Gauss-Legendre quadrature rule~\cite{tre00}:
\[
   \frac{1}{a}\int_{\tau}^{\tau+a} P(\theta_k(t-s))\: ds \approx \sum_{i=1}^n w_iP(\theta_k(t-[ax_i/2+a/2+\tau]))
\]
where $x_i$ is the $i$th root of the $n$th Legendre polynomial, $P_n(x)$, and the weights 
(normalized to sum to 1) are
\[
   w_i=\frac{1}{(1-x_i^2)P'_n(x_i)}.
\]
We set $n=5$.
A traveling wave for $\tau=2,a=3$, and other parameters as in Table~\ref{tab:param}, looks
almost indistinguishable from the one shown in Fig.~\ref{fig:trinf}.
A periodic solution for $\tau=2,a=1$, and other parameters as in Table~\ref{tab:param}, looks
almost indistinguishable from the one shown in Fig.~\ref{fig:perinf}.
A solution like that in Fig.~\ref{fig:other}(a) was found, as were spatially uniform
and steady bump solutions (not shown).

\subsection{Conduction delays}

We now show some solutions of~\eqref{eq:dthetadt}, \eqref{eq:IdiscB}. Traveling waves with different angular wave numbers
are shown in Fig.~\ref{fig:twist} and an approximately periodic solution is shown in 
Fig.~\ref{fig:brebump}. Spatially uniform stationary
and periodic solutions were also found (not shown). 

\begin{figure}[htbp]
\centering
\includegraphics[width=0.8\textwidth]{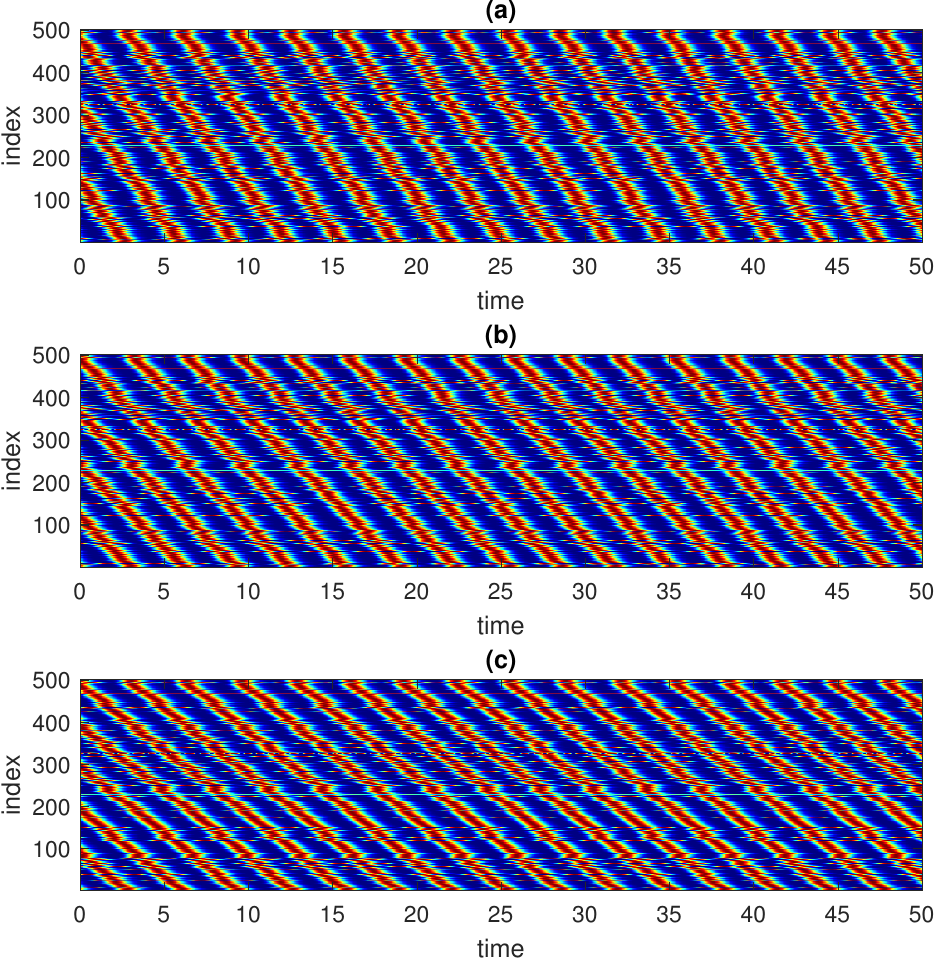}
\caption{Traveling waves with different angular wave numbers
for~\eqref{eq:dthetadt}, \eqref{eq:IdiscB}. (a): $k=3, 1/c=0.5$. (b): $k=4, 1/c=1$. (c): $k=5, 1/c=1/5$.  Parameters as in Table~\ref{tab:param:2} with
$\eta_0=0.1,\tau=0.2,N=501$. The same $\eta_i$
were used for all simulations. Color map is as in Fig.~\ref{fig:perinf}.}
\label{fig:twist}
\end{figure}

\begin{figure}[htbp]
\centering
\includegraphics[width=0.8\textwidth]{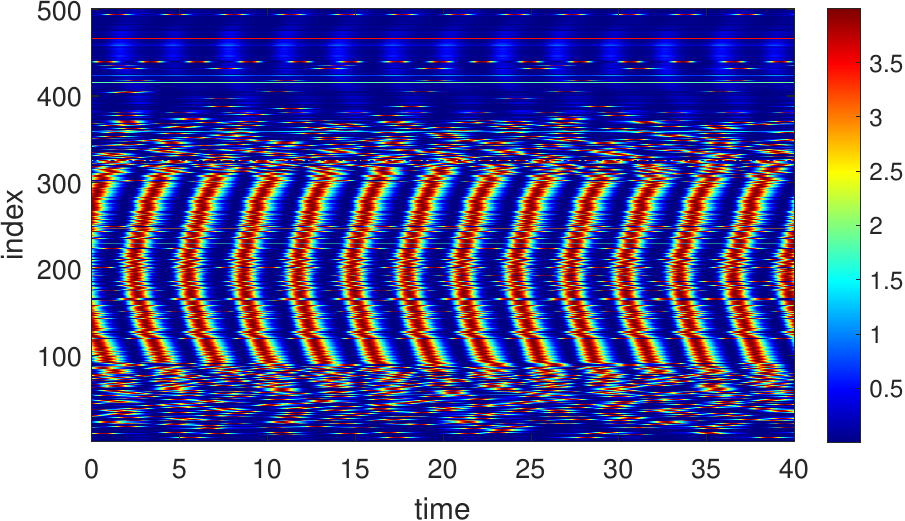}
\caption{An approximately periodic solution
for~\eqref{eq:dthetadt}, \eqref{eq:IdiscB}.   Parameters as in Table~\ref{tab:param:2} with
$\eta_0=-0.2,c=1,\tau=2,N=501$.  $(1-\cos{\theta_j})^2$ is shown in colour. }
\label{fig:brebump}
\end{figure}

%\section*{Acknowledgments}
%The work of O.E.O. was supported by the Deutsche Forschungsgemeinschaft under Grant No. OM 99/2-3.

%\bibliographystyle{siamplain}
%\bibliography{spiral}

\end{document}